\pdfoutput=1

\documentclass[sigconf, nonacm]{acmart}

\newcommand\vldbdoi{10.14778/3819518.3819551}
\newcommand\vldbpages{2289 - 2302}
\newcommand\vldbvolume{19}
\newcommand\vldbissue{9}
\newcommand\vldbyear{2026}
\newcommand\vldbauthors{\authors}
\newcommand\vldbtitle{\shorttitle} 
\newcommand\vldbavailabilityurl{https://github.com/zhangzhongshuai/C-KMaxI-code}
\newcommand\vldbpagestyle{empty}

\usepackage{graphicx}
\usepackage{balance}
\usepackage[linesnumbered,ruled,vlined]{algorithm2e}
\usepackage{subfigure}

\usepackage{amssymb}
\usepackage{comment}
\usepackage{textcomp}
\usepackage{graphics}
\usepackage{amsmath}
\usepackage{tabularx} 
\usepackage{array}

\usepackage{enumerate}
\usepackage{amsmath}
\usepackage{color,xcolor}
\usepackage{soul}
\usepackage[linesnumbered,ruled,vlined]{algorithm2e}
\usepackage{float}
\usepackage{engord}
\usepackage{multirow}
\usepackage{amsthm}
\usepackage{algpseudocode}
\usepackage{enumitem}
\setlist[itemize]{noitemsep, topsep=0pt}
\renewcommand{\proofname}{\textbf{Proof.}}
\let\oldproof\proof
\renewcommand{\proof}{\oldproof\setlength{\parindent}{2em}} 
\newtheorem{definition}{Definition}

\newtheorem{lemma}{Lemma}

\newtheorem{theorem}{Theorem}

\newcommand{\query}{C-KMaxI}
\everymath{\displaystyle}

\begin{document}
\title{Continuous Query for Top-$K$ Maximal Sum Intervals over Streaming Data}

\author{Zhongshuai Zhang}
\affiliation{%
  \institution{Beijing Institute of Technology}
}
\email{zhang\_zhongshuai@126.com}

\author{Xiaochun Yang}
\affiliation{%
  \institution{Northeastern University}
}
\email{yangxc@mail.neu.edu.cn}

\author{Baihua Zheng}
\affiliation{%
  \institution{Singapore Management University}
}
\email{bhzheng@smu.edu.sg}

\author{Rui Zhu}
\affiliation{%
  \institution{Shenyang Aerospace University}
}
\email{zhurui@sau.edu.cn}

\author{Haomin Li}
\affiliation{%
  \institution{Northeastern University}
}
\email{lihm@mails.neu.edu.cn}

\author{Bin Wang}
\affiliation{%
  \institution{Northeastern University}
}
\email{binwang@mail.neu.edu.cn}

\begin{abstract}
The continuous identification of top-$k$ maximal sum intervals using a sliding window over a data stream is a critical operation for applications in IoT and beyond. A maximal sum interval is a non-overlapping, contiguous subsequence with the maximal sum in a sequence of signed values. Existing algorithms are ill-suited for streaming contexts: they either exhaustively enumerate all intervals even for small $k$ values, or depend on indexes that require frequent and costly restructuring. 
We propose a novel \emph{partition}-based strategy. Our core insight is a partitioning scheme that guarantees that any maximal sum interval is fully contained within a single partition, enabling independent and parallel processing. This design provides two key advantages: it enables safe pruning of partitions that cannot contribute to top-$k$ results, drastically narrowing the search space, and it enables efficient, incremental maintenance of the maximal sum intervals in each partition. 
We develop algorithms for partition construction, incremental partition updates, and partition-based top-$k$ maximal sum interval search. 
Extensive experiments on real and synthetic datasets demonstrate that our approach significantly improves efficiency.
\end{abstract}

\maketitle

\begingroup
\renewcommand\thefootnote{}\footnote{\noindent
This paper has been accepted by VLDB 2026.}
\addtocounter{footnote}{-1}
\endgroup

\pagestyle{\vldbpagestyle}
\begingroup\small\noindent\raggedright\textbf{PVLDB Reference Format:}\\
\vldbauthors. \vldbtitle. PVLDB, \vldbvolume(\vldbissue): \vldbpages, \vldbyear.\\
\href{https://doi.org/\vldbdoi}{doi:\vldbdoi}
\endgroup
\begingroup
\renewcommand\thefootnote{}\footnote{\noindent
This work is licensed under the Creative Commons BY-NC-ND 4.0 International License. Visit \url{https://creativecommons.org/licenses/by-nc-nd/4.0/} to view a copy of this license. For any use beyond those covered by this license, obtain permission by emailing \href{mailto:info@vldb.org}{info@vldb.org}. Copyright is held by the owner/author(s). Publication rights licensed to the VLDB Endowment. \\
\raggedright Proceedings of the VLDB Endowment, Vol. \vldbvolume, No. \vldbissue\ %
ISSN 2150-8097. \\
\href{https://doi.org/\vldbdoi}{doi:\vldbdoi} \\
}\addtocounter{footnote}{-1}\endgroup

\ifdefempty{\vldbavailabilityurl}{}{
\begingroup\small\noindent\raggedright\textbf{PVLDB Artifact Availability:}\\
The source code, data, and/or other artifacts have been made available at \url{\vldbavailabilityurl}. 
\endgroup
}

\section{Introduction}\label{sec:intro}

The Top-$k$ Maximal Sum Intervals  (KMaxI) problem~\cite{DBLP:conf/ismb/RuzzoT99, DBLP:journals/ijfcs/BaeT07}, a classic problem studied for over 25 years, aims to find the $k$ non-overlapping, contiguous subsequences with the highest sums in a sequence of real numbers. These intervals are ranked such that the $i$-th interval has the maximum possible sum from the elements disjoint from all higher-ranked (1st to ($i-1$)-th) intervals. Identifying such high-impact contiguous segments is fundamental in time-series analysis, supporting applications such as burst detection, anomaly identification, and trend discovery in sequential data.

This paper extends KMaxI to the challenging streaming context. 
In many modern applications, signals arrive continuously and must be analyzed in real time, making it necessary to support efficient interval discovery over sliding windows.
To address this need, we introduce the Continuous KMaxI query (\query), which provides real-time insights over a sliding window. Sliding windows can be time- or count-based~\cite{DBLP:conf/sigmod/MouratidisBP06, DBLP:conf/cikm/HaghaniMA09, DBLP:conf/bigdataconf/RamperezZV21, DBLP:journals/vldb/JinYCYL10, DBLP:conf/stoc/AjtaiJKS02, DBLP:conf/icde/GedikWYL07, DBLP:conf/icde/GuYW07, DBLP:journals/pvldb/SubercazeGGKL17, DBLP:conf/edbt/Li0ZMY17, DBLP:journals/dase/ChenLYL17}. In this work, we focus on the count-based model. 
Formally, a \query\ query, denoted as $q(n,s,k)$, continuously monitors the $n$ most recent signed numeric values in the query window $W$. Each time $W$ advances by $s$ positions, the query returns the top-$k$ maximal sum intervals within the updated window. The techniques developed in this work can also be applied to time-based sliding windows. 

\begin{figure}[t]
    \centering
    \includegraphics[width=\linewidth]{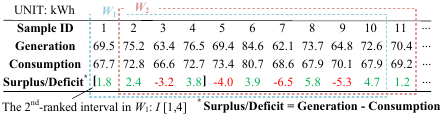}   
            \caption{A running example on power generation, consumption, and storage of a microgrid.}
    \label{fig:running example}
\end{figure}

\noindent
\textbf{Motivations and Challenges}. The \query\ extends interval-based analysis to streaming environments by continuously identifying high-impact contiguous periods within the sliding window. Such intervals capture sustained periods of strong cumulative signal values and are essential for detecting bursts, trends, or anomalous behaviors. In many real-world applications, identifying multiple disjoint high-impact periods is more informative than detecting a single peak interval, as systems often need to prioritize several candidate periods for downstream analysis or operational planning.

The \query\ query is motivated by its importance in real-time analytics across diverse domains. For example, it can be used  to identify significant genomic regions in high-throughput bio-informatics sequencing~\cite{DBLP:conf/ismb/RuzzoT99, doi:10.1073/pnas.90.12.5873, doi:10.1126/science.1621093, spouge2012ruzzo}, detect trending patterns in web scraping~\cite{10.1145/1526709.1526840} and information retrieval systems~\cite{10.1145/2661829.2661905}, and extract informative temporal features for online AI models~\cite{jin2025ideally}. 
In these applications, the data naturally form signed sequences, where values may be positive or negative, representing gains and losses, increases and decreases, or surplus and deficit signals.
While KMaxI handles real-valued sequences, many practical signals are signed. Supporting signed sequences therefore enables systems to identify sustained positive trends despite temporary fluctuations.

A compelling application of {\query} query arises in the Internet of Things (IoT), particularly in real-time monitoring and energy management for microgrids, which have received significant attention in recent years~\cite{Escobar2024,10999049}. Figure~\ref{fig:running example} illustrates a running example.  
Here, a \query\ query continuously analyzes a \emph{signed} microgrid signal to identify time periods with the largest \emph{cumulative} energy surplus.
Specifically, the input stream can be modeled as a surplus/deficit sequence, where each value represents the difference between power generation and consumption. Positive values indicate energy surplus, while negative values indicate energy deficit. 
The returned top-$k$ non-overlapping maximal sum intervals therefore correspond to the $k$ most significant, disjoint periods with the highest cumulative surplus. These intervals correspond to sustained periods of excess generation, enabling operators to identify when surplus energy should be stored, redistributed, or sold to the grid. While we illustrate the query using a microgrid monitoring scenario, the proposed \query\ framework is broadly applicable to many streaming analytics tasks that require identifying sustained high-impact periods  from continuously evolving signals.

As new measurements arrive, the query result evolves with the sliding window. Here, $I[i,j]$ denotes the contiguous interval from
objects $o_i$ to $o_j$ and its associated sum is the total value over that interval.
For example, the top intervals may change from $I[8,8]$ with sum $5.8$ and $I[1,4]$ with sum $4.8$ in window $W_1$, to $I[8,11]$ with sum $6.4$ and $I[6,6]$ with sum $3.9$ in window $W_2$. This dynamic update capability enables continuous monitoring of emerging surplus periods as the data stream evolves.

Despite its practical importance, efficiently computing \query\ queries over data streams presents two formidable challenges. 
First, the combination of high-velocity data streams and the inherent computational complexity of the KMaxI problem makes it difficult to deliver low-latency results, especially for large window sizes. Second, the set of top-$k$ intervals is highly volatile. As the window slides, intervals may split, merge, or be completely replaced. For instance, as shown in Figure~\ref{fig:running example}, when the window advances to $W_2$, the result interval $I[8,8]$ merges with adjacent intervals to form a new result interval $I[8,11]$. This volatility renders static algorithms inefficient, as they cannot update results incrementally and must instead perform costly re-computations from scratch.

\noindent
\textbf{State-Of-The-Art Efforts}. 
Existing research on the KMaxI problem is primarily designed for static data environments and falls into two categories.  Scan-based approaches, such as Ruzzo-Tompa~\cite{DBLP:conf/ismb/RuzzoT99}, identify all maximal intervals through a sequential scan but require a subsequent sorting step to select the top-$k$ results, which becomes inefficient especially when $k$ is much smaller than the total count. 
Index-based approaches, like the Tournament-Tree~\cite{DBLP:journals/ijfcs/BaeT07}, construct auxiliary structures that allow direct retrieval of the top-$k$ intervals without exhaustive comparisons. While effective in static scenarios, these approaches are not suitable for data streams. For example, adapting the Tournament-Tree would necessitate frequent and costly index reconstructions with every window slide, resulting in prohibitive computational overhead. Consequently, existing algorithms cannot \emph{efficiently} support the \query\ query.

\noindent
\textbf{Our Solution}. To overcome these limitations, we propose a novel partition-based framework for efficient {\query} processing. 
The key idea is to partition the query window into a set of 
disjoint partitions 
such that every maximal sum interval is fully contained within a single partition. This property enables two important advantages. 
First, since the top-$k$ results must come from at most $k$ partitions, many partitions can be safely pruned, 
drastically reducing the search space. Second, changes in maximal sum intervals are closely tied to the evolution of partitions as the window slides, which can be categorized into four distinct patterns, namely unaltered, shrinking, extended, and emergent. We develop efficient update strategies for each pattern, maximizing the reuse of previously identified maximal sum intervals.

Based on this insight, our solution proceeds in three stages. First, we construct the partitions and identify the top-1 maximal sum interval within each partition using a dynamic programming strategy. 
Next, we perform a partition-based $k$-maximal sum interval search
algorithm that examines promising partitions (at most $k$ partitions) to identify the top-$k$ results. Finally, we design an incremental maintenance mechanism that updates partitions and their corresponding maximal intervals as the query window evolves. By carefully categorizing partition evolution patterns and reusing previously computed results whenever possible, our approach significantly reduces the computational overhead of \query\ evaluation.

\renewcommand{\arraystretch}{0.75}
\begin{table}[t]
\footnotesize
  \caption{Notations}
  \label{tab:Notations}
  \centering
  \begin{tabularx}{\columnwidth}{|>{\centering\arraybackslash}m{0.17\columnwidth}|X|}
    \hline
    {\centering Symbols} & {\centering Description \par} \\ \hline
    $o_i$ & the $i$-th real number object \\ \hline
    $I[i,j]$, $S[i,j]$ & the interval from $o_i$ to $o_j$, the sum of objects in $I[i,j]$ \\ \hline
    $MaxI[i,j]$ & maximal sum interval from $o_i$ to $o_j$ \\ \hline
    $q(n,s,k)$ & the query, modeled as a sliding window $W$ monitoring most recent $n$ objects, returning the top-$k$ maximal intervals over a data stream $\mathcal{S}$, with $s$ being object update number for each slide \\ \hline
    $M[i]$, $M[\cdot]$ & the largest sum of an interval starting at position $i$, $M[\cdot]$ is the collective term for $M_i$ \\ \hline
    $l_j$, $r_j$, $LM_j$ & the leftmost index in partition $P_j$, the rightmost index in $P_j$, the index of the largest $M$ in $P_j$ \\ \hline
    pTMS $M[u,v]$ & a Potential Target $M$-Subsequence that has $M[u]/M[v]$ as its strict maximum/minimum $M$ value. \\ \hline
  \end{tabularx}
\end{table}

\section{Problem Definition} \label{sec:problem def}

A \emph{maximal sum interval} (or \emph{maximal interval}, denoted as $MaxI[i,j]$), is a contiguous subsequence $(o_i,\cdots,o_j)$ in a sequence of real numbers $(o_1,o_2,\cdots,o_n)$ whose sum $S[i,j]$ is global maximal, i.e., $S[i,j]=\max\nolimits_{1\le i\le j\le n} S[i,j]$, with $S[i,j]>0$.
This paper considers the set of globally ranked, positive, and non-overlapping maximal intervals, following the iterative definition in~\cite{DBLP:conf/ismb/RuzzoT99}. Specifically, the interval with the largest sum in the entire sequence
is ranked first; the $i$-th interval is the maximal interval computed from elements disjoint from all higher-ranked intervals.
To avoid ambiguity under ties, we treat intervals containing non-empty zero-sum prefixes or suffixes as invalid.
Otherwise, such intervals could create multiple overlapping intervals with identical sums when adjacent to positive-sum segments. Therefore, if a sequence has fewer than $k$ valid maximal intervals, \query\ may return fewer than $k$ results.

Consider the sequence in query window $W_1$ in Figure~\ref{fig:problem definition}. 
The interval $I[10,10]$ (sum=12) is a global maximum sum interval since no other subsequence has a larger sum. 
Subsequences like $I[3,3]$ and $I[11,13]$ are invalid due to non-positive sums. While $I[1,2]$ and $I[2,2]$ both have sum 6, $I[1,2]$ is invalid as it contains a zero-sum prefix. In addition, each lower rank interval must be disjoint from all higher-ranked ones, e.g., $I[4,10]$ has a sum larger than $I[4,8]$ but is inadmissible, as it overlaps with the 1st-ranked interval $I[10,10]$.

We now formally define \textbf{\query}, the \underline{c}ontinuous query for top-\underline{$k$} \underline{max}imal sum \underline{i}ntervals over streaming data. 

\begin{definition}  \label{defn:query definition}
\textbf{\query.} 
Given a stream $\mathcal{S}$ of 
signed values (hereafter referred to as objects), 
a \query\ query, denoted as $q(n,s,k)$, maintains a sliding window $W$ of the most recent $n$ objects from $\mathcal{S}$. Each time the window slides by $s$ positions ($s$ objects expire and $s$ new objects arrive), the query returns the top-$k$ maximal intervals $\{MaxI_1, MaxI_2, \dots, MaxI_k\}$ in the updated window $W$. 
\end{definition}

Figure~\ref{fig:problem definition} illustrates an example \query\ query $q(13,4,2)$. In window $W_1$, four maximal intervals exist: $MaxI[2,2]$, $MaxI[4,8]$, $MaxI[10,10]$, and $MaxI[12,12]$. The top-2 results are $MaxI_1[10,10]$ (sum = $12$) and $MaxI_2[4,8]$ (sum=$10$). 
After the window slides to $W_2$, the intervals update to $MaxI'[6,6]$, $MaxI'[8,8]$, $MaxI'[10,10]$, and $MaxI'[12,17]$. The top-2 results become $MaxI_1'[12,17]$ (sum=$13$) and $MaxI_2'[10,10]$ (sum=$12$).

Streams with signed values naturally arise in many real-world applications. Examples include surplus signals in energy systems (supply$-$demand), financial price changes, sensor deltas, and residual signals obtained after subtracting a baseline. Residual signals, in particular, convert strictly positive or negative measurements into signed streams (e.g., by subtracting  the mean or expected value). In all these scenarios, a \query\ query identifies the top-$k$ intervals with the largest cumulative deviations.

\begin{figure}[t]
    \centering
            \includegraphics[width=\linewidth]{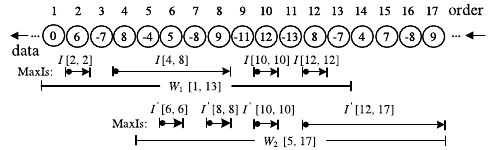}   
            \caption{An example of $q(n=13, s=4, k=2)$.}
    \label{fig:problem definition}
\end{figure}

\section{Partitioning}\label{sec:partition-defn}

This section formalizes our partitioning strategy for optimizing \query\ queries. The core idea of this approach is a guarantee that any maximal interval lies entirely within a single partition, which enables independent processing of partitions. This property yields two key advantages: safe pruning of partitions that cannot contain result intervals, and efficient incremental maintenance via partition evolution patterns. We begin by formally defining partitions and subsequently demonstrate how these characteristics are realized.

\begin{definition} \label{defn:partition}
\textbf{Partition.} A partition $\mathcal{P}=\{P_1, P_2,\dots, P_p\}$ of a query window $W=(o_1,o_2,\dots,o_n)$  is a division of  $W$ into contiguous, non-overlapping segments that collectively cover $W$.  Each partition \(P_i = \{o_{l_i}, o_{l_i+1}, \dots, o_{r_i}\}=[l_i,r_i]\) with \(1\le l_i \leq r_i\leq n\) has a right boundary at position $r_i$, which is defined as the position that maximizes the cumulative sum from the start of the partition $P_i$; that is, $S[l_i, r_i]\ge S[l_i, t]$ for all $t\in [l_i, n]$.  
\end{definition}

Figure~\ref{fig:benefit of incremental maintain-v2} shows the partitioning of $W_1=(o_1, \cdots, o_{19})$ into five partitions, $P_1,\ldots,P_5$.
Consider partition $P_1=(o_1, o_2, o_3, o_4)=[1,4]$.  Its right boundary is at position 4 (i.e., $r_1=4$). This means that for a subsequence starting from position 1, the cumulative sum is maximized at position 4; any interval $[1, j]$ with $j\ne 4$ yields a smaller cumulative sum. This property extends to any subsequence starting from $s\in [1,4]$, as formally established in Lemma~\ref{lemma:right-bound alignment}. This follows directly from the functional definition of the partition's right boundary. We omit the formal proof in the interest of brevity.

\begin{figure}[t]
    \centering
            \includegraphics[width=\linewidth]{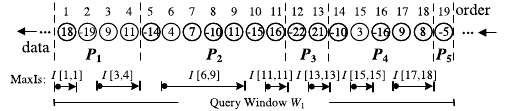}   
            \caption{An example of partitions. 
            }
    \label{fig:benefit of incremental maintain-v2}
\end{figure}

\begin{lemma}\label{lemma:right-bound alignment}
\textbf{Right-bound Alignment.} Let partition $P_i=(o_{l_i}, o_{l_i+1}$, $\dots, o_{r_i})=P[l_i,r_i]$ be a partition of size greater than 1 of window  $W=(o_1, o_2$, $\cdots, o_n)$, where $r_i$ is its right boundary as defined in Definition~\ref{defn:partition}. Then, for every start index $j\in [l_i, r_i]$, the cumulative sum starting from $j$ also attains its maximum at the same right boundary $r_i$. Formally, $\forall j\in [l_i,r_i]$, $S[j,r_i]=\max\nolimits_{t\in[j,n]}S[j,t]$.  
\end{lemma}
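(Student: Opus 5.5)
The claim follows from the additivity of interval sums. We rewrite $S[j,t]$ as a difference of two sums that both start at $l_i$, and then apply the defining property of $r_i$ from Definition~\ref{defn:partition}, namely $S[l_i,r_i]\ge S[l_i,t]$ for all $t\in[l_i,n]$. No earlier lemma is needed beyond that definition.

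Fix $j\in[l_i,r_i]$ and any $t\in[j,n]$.
\begin{itemize}
\item If $j=l_i$, the statement is exactly the definition of the right boundary.
\item If $j>l_i$, split at $j-1$:
\[
S[l_i,t]=S[l_i,j-1]+S[j,t], \qquad S[l_i,r_i]=S[l_i,j-1]+S[j,r_i].
\]
The second identity holds because $j\le r_i$. Since $t\in[j,n]\subseteq[l_i,n]$, the definition gives $S[l_i,r_i]\ge S[l_i,t]$. Subtracting the common prefix $S[l_i,j-1]$ from both sides yields $S[j,r_i]\ge S[j,t]$.
\end{itemize}
Because $r_i\in[j,n]$ is itself an admissible endpoint, $S[j,r_i]$ is attained and equals $\max_{t\in[j,n]}S[j,t]$.

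The only delicate point is bookkeeping rather than mathematics. One must check that the range of $t$ in the lemma, $[j,n]$, lies inside the range over which the definition quantifies, $[l_i,n]$, and that $r_i\ge j$, so that the prefix $[l_i,j-1]$ is shared by both sums. The hypothesis that $P_i$ has size greater than 1 plays no real role, since the size-1 case is covered by the $j=l_i$ case above.

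If the paper intends a strict maximum at $r_i$ (its example says other endpoints give smaller sums), the same subtraction carries strictness over: $S[l_i,r_i]>S[l_i,t]$ for $t\neq r_i$ implies $S[j,r_i]>S[j,t]$ for $t\neq r_i$. So uniqueness of the maximizer also transfers.
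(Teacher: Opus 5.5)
Your proof is correct and is the direct argument the paper has in mind. The paper omits the proof, saying only that it follows from the definition of the right boundary, and your prefix split $S[l_i,t]=S[l_i,j-1]+S[j,t]$, applied to the defining inequality $S[l_i,r_i]\ge S[l_i,t]$ for $t\in[l_i,n]$, makes that derivation explicit.
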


Beyond the right-bound alignment property, our partitioning scheme exhibits a more profound property: any maximal interval is entirely contained within a single partition. This guarantee is formally encapsulated by Theorem~\ref{theo:mii overlaps}.

\begin{theorem} \label{theo:mii overlaps}
No maximal interval overlaps with any two adjacent partitions.
\end{theorem}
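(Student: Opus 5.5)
We argue by contradiction, showing that an interval crossing a partition boundary can be strictly improved. Suppose some maximal interval $I[a,b]$, with positive sum and no zero-sum prefix or suffix, overlaps two adjacent partitions $P_i=[l_i,r_i]$ and $P_{i+1}=[r_i+1,r_{i+1}]$. Then $l_i\le a\le r_i<b$. The key tool is Lemma~\ref{lemma:right-bound alignment}, applied to the start index $a\in[l_i,r_i]$. It gives $S[a,r_i]=\max_{t\in[a,n]}S[a,t]\ge S[a,b]$, since $b\le n$. Splitting the interval at the boundary, $S[a,b]=S[a,r_i]+S[r_i+1,b]$, so we obtain $S[r_i+1,b]\le 0$.

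We then split into two cases. If $S[r_i+1,b]<0$, then $I[a,r_i]$ is a proper sub-interval of $I[a,b]$ with strictly larger sum. It uses only elements of $I[a,b]$, so it is disjoint from every higher-ranked interval whenever $I[a,b]$ is. Hence it is admissible at the same rank of the iterative definition, which contradicts the maximality of $I[a,b]$ at that rank. If $S[r_i+1,b]=0$, then $I[a,b]$ has the non-empty zero-sum suffix $I[r_i+1,b]$. Intervals with such suffixes are declared invalid in Section~\ref{sec:problem def}, so $I[a,b]$ is not a valid maximal interval, again a contradiction.

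If the interval spans more than two partitions, we apply the same argument at the first boundary $r_i$ it crosses. So no maximal interval overlaps two adjacent partitions, and each lies in a single partition.

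The main obstacle is making the ranking argument rigorous. The competitor $I[a,r_i]$ must be feasible at the same step of the iterative construction, not merely at the first step. This holds because feasibility is inherited by sub-intervals, as disjointness from earlier intervals is preserved. I would also check the edge case where $P_i$ has size $1$, which Lemma~\ref{lemma:right-bound alignment} excludes. There $a=l_i=r_i$, and Definition~\ref{defn:partition} directly gives $S[l_i,r_i]\ge S[l_i,t]$ for all $t$, which is exactly the inequality needed.
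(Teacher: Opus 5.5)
Your proof is correct and follows the paper's argument: both obtain $S[a,b]\le S[a,r_i]$ from right-bound alignment and contradict it with the strict inequality $S[a,b]>S[a,r_i]$ that maximality forces. Your case split on $S[r_i+1,b]<0$ versus $S[r_i+1,b]=0$ (the latter handled by the zero-sum-suffix rule) and your remark on size-one partitions spell out the strictness and the edge case that the paper's proof takes for granted "by definition."
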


\begin{proof} We proceed by contradiction. Assume the theorem is false, i.e., meaning there exists a maximal interval $MaxI[a,b]$ that overlaps with two adjacent partitions $P_i=[l_i,r_i]$ and $P_{i+1}=[l_{i+1},r_{i+1}]$ where $r_i=l_{i+1}-1$, $l_i\le a \le r_i$, and $b\ge l_{i+1}$. By the definition of a maximal interval, we have $S[a,b]> S[a,r_i]$. On the other hand, the right-bound alignment property of a partition guarantees that  $S[a,b]\leq S[a,r_i]$. These two constraints introduce a contradiction, implying that our initial assumption is false, thereby proving the theorem. 
\end{proof}

For $W_1$ in Figure~\ref{fig:benefit of incremental maintain-v2}, there are in total seven maximal intervals:
$\{MaxI[1,1]$, $MaxI[3,4]$, $MaxI[6,9]$, $MaxI[11,11]$, $MaxI[13,13]$, $MaxI[15,15]$, $MaxI[17,18]$\}. As guaranteed by Theorem~\ref{theo:mii overlaps}, each maximal interval is contained entirely within a single partition (e.g. $MaxI[1,1]$ and $MaxI[3,4]$ lie in partition $P_1$).

A key benefit of our partitioning strategy is its support for efficient query processing by focusing on local maximal intervals. Since Theorem~\ref{theo:mii overlaps} guarantees that any maximal interval is entirely contained within a single partition, partitions can be processed independently. This containment enables a crucial pruning heuristic: we first identify the top-1 maximal interval within each partition. If this local champion for a given partition $P_i$ does not qualify for the global top-$k$ list, then no other intervals within $P_i$ can be query results. Therefore, the partition $P_i$ can be safely pruned.

Consider the query in Figure~\ref{fig:benefit of incremental maintain-v2} that searches for top-$3$ maximal intervals within $W_1$. The top-1 maximal intervals with each partition are $MaxI[3,4]$ (sum $20$) in $P_1$, $MaxI[11,11]$ (sum $16$) in $P_2 $, $MaxI[13,13]$ (sum $21$) in $P_3$, and $MaxI[17,18]$ (sum $17$)  in $P_4$. $P_5$, only containing a non-positive object, can be skipped as it does not contain any maximal intervals. 
Among these candidates, $MaxI[11,11]$ from $P_2$ is not among the top-$3$. Consequently, no other interval in $P_2$, such as $MaxI[6,9]$, can be a query result, and the entire partition can be pruned.

\section{The Initialization Construction}
In this section, we first introduce the partition construction algorithm and then propose a search algorithm to locate $k$-maximal intervals based on partitions.

\subsection{The Partition Construction}\label{sec:partition-opt}

We propose a dynamic programming-based partition construction algorithm. 
Moreover, while constructing the partitions, it provides direct access to the top-1 maximal interval within each partition, without requiring extra computation. 

Given a query window $W=(o_1, \cdots, o_n)$, we define two arrays $M[1 \cdots n]$ and $E[1 \cdots n]$. Here, $M[i]$ stores the largest cumulative sum of any contiguous interval starting at position $i$, and $E[i]$ records the ending position of the corresponding interval, i.e., $S[i,E[i]]=M[i]$. Formally, 
$M[i]= S[i, E[i]] = \mathop{\max}\limits_{1\leq i \leq j \leq n} \sum\nolimits_{\mu=i}^j o_\mu$.
Both arrays can be computed in reverse order using the following recurrence relation: 
\begin{equation}
\label{equ:dp}
   ( M[i], E[i]) =
  \begin{cases}
  (o_i+M[i+1], E[i+1]), & \text{if } M[i+1]>0, \\[3pt]
  (o_i, i), & \text{if } M[i+1]\leq 0
  \end{cases}  
\end{equation}
with the base case $M[n]=o_n$ and $E[n]=n$.  
Intuitively, if $M[i+1]\le 0$, the single-object interval $I[i,i]$ (i.e, just $o_i$) yields the maximal sum starting at $i$, as including any subsequent interval would only decrease the sum. 
Crucially, any position $i$ where $M[i]\le 0$ is identified as a \emph{cutting point}, as formalized in Definition~\ref{defn: cutting Point}. A non-positive $M[i]$ indicates that no interval $I[j,\cdot]$ starting at position $j<i$ 
can achieve a larger sum by extending to include any interval starting at position $i$. Consequently, these positions naturally form the boundaries of our partitions. 

\begin{definition} \label{defn: cutting Point}
    \textbf{Cutting Points.} A cutting point is a position $i$ in the query window $W$ for which the value $M[i]\le 0$. The object $o_i$ located at a cutting point is the physical marker for that boundary.
\end{definition}

\begin{figure}[t]
    \centering
            \includegraphics[width=\linewidth]{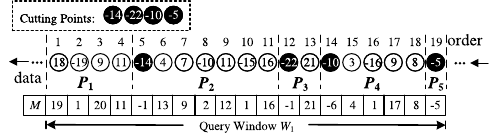}   
            \caption{Constructing partitions.}
    \label{fig:constructing partitions}
\end{figure}

A partition is determined by two consecutive cutting points, inclusive of its left cutting point and exclusive of the right one. The window boundaries (i.e., positions 1 and $n$ of a query window of size $n$) serve as implicit partition boundaries if they are not themselves cutting points.
As illustrated in Figure~\ref{fig:constructing partitions}, the cutting points $o_5$, $o_{12}$, $o_{14}$, and $o_{19}$ divide the entire sequence $W_1$ into five partitions.

\begin{algorithm}[t]
\footnotesize
  \caption{Partition Construction}
  \label{algo: MSP algorithm}
  \KwIn{The object array $o[1,\cdots, n]$}
  \KwOut{$M$ values; partitions $\mathcal{P}$; indices of the largest $M$ in each partition $\mathcal{LM}$, and set of cutting points $\mathcal{C}$.}
  Initialize array $M[1,\cdots, n]$, partition set $\mathcal{P}$,
   the largest $M$ index set $\mathcal{LM}$, {the cutting point index set $\mathcal{C}$,}
  the right cutting point index $rc$\;
  $M[n] \gets o[n]$, 
  $rc\gets n+1$, $LM\gets n$\;
  \For {$i\gets n-1$ to $1$}{
    \If (\tcp*[h]{scan a cutting point}) {$M[i+1]\leq 0$}{

        {$\mathcal{C}$.\texttt{add} ($i+1$)}; 
        $M[i] \gets o[i]$  \;
        
        \If  {$P[i+1,{rc}-1]$ is a non-degenerate partition}{
             $\mathcal{P}$.\texttt{add} ($P[i+1,{rc}-1]$), $\mathcal{LM}$.\texttt{add} ($LM$)\;
        }
        $rc\gets i+1$, $LM\gets i$\;
    }
    \Else (\tcp*[h]{enter a non-degenerate partition}) {
        $M[i] \gets o[i]+M[i+1]$\;
        \If{$M[i]>M[LM]$}{
            $LM\gets i$\;
        }
       
    }
    
  }
  
    \If{$P[1,{rc}-1]$ is a non-degenerate partition}{
            $\mathcal{P}$.\texttt{add} ($P[1,{rc}-1]$), $\mathcal{LM}$.\texttt{add} ($LM$)\;
            }
    
  \Return{$M$, $\mathcal{P}$, $\mathcal{LM}$, $\mathcal{C}$}\;
\end{algorithm}

The partition construction process is detailed in Algorithm~\ref{algo: MSP algorithm}. After initializing the key data structures (Lines 1-2), the algorithm performs a reverse scan of the window $W$ to compute $M[i]$ values and identify cutting points (Lines 3-12). It maintains a pointer, $rc$, to track the position of previous cutting point. When a new cutting point is found at position $i+1$ (Lines 4-5), objects within the segment $[i+1, rc-1]$ are finalized as a new partition (Lines 6-7). In addition, $rc$ is updated to $i+1$ (Line 8). Here, we want to highlight that not all partitions are retained in our algorithm; a degenerate partition containing only a single non-positive object is discarded, as it contain zero positive-sum maximal intervals. Only non-degenerate partitions are returned to serve \query\ query.

The algorithm also maintains a list $\mathcal{LM}$ that stores, for each non-degenerate partition, the indices of the largest $M[i]$ value (Lines 11-12). Formally, for a non-degenerate partition $P_i=[l_i, r_i]$, $LM_i=\arg \max\nolimits_{j\in [l_i,r_i]}M[j]$. Accordingly, the interval $I[LM_i,r_i]$ is the top-1 maximal interval in $P_i$ and meanwhile also a valid maximal interval in $W$, as formally established in Lemma~\ref{lemma:top1in p}. The proof is provided in our technical report~\cite{long-version} due to space constraints. 
List $\mathcal{LM}$ provides direct access to the top-1 maximal interval in each partition, facilitating the processing of \query\ as detailed later. 

\begin{lemma} \label{lemma:top1in p}
    For a non-degenerate partition $P_j$ that ends at position $r_j$, the interval $I[LM_j,r_j]$ is the maximal interval with the largest sum within $P_j$ and also a maximal interval globally. 
\end{lemma}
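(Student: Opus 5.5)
Within a non-degenerate partition $P_j=[l_j,r_j]$, every index $i\in[l_j,r_j-1]$ satisfies $M[i+1]>0$; otherwise $i+1$ would be a cutting point inside $P_j$. Hence, by recurrence~(\ref{equ:dp}), $E[i]=E[i+1]$ throughout the partition, and so $E[i]=r_j$ for every $i\in[l_j,r_j]$. Concretely, $M[r_j+1]\le 0$ (or $r_j=n$), so $E[r_j]=r_j$, and this value propagates leftward. This is the DP counterpart of Lemma~\ref{lemma:right-bound alignment}: $M[i]=S[i,r_j]=\max_{t\ge i}S[i,t]$. The plan has three steps.

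\textbf{Step 1: largest sum within $P_j$.} Any interval $I[a,b]\subseteq P_j$ satisfies $S[a,b]\le M[a]\le M[LM_j]=S[LM_j,r_j]$. So $I[LM_j,r_j]$ attains the largest sum of any interval contained in $P_j$. It remains to show that $M[LM_j]>0$. For a non-degenerate partition, either some $M[i]$ with $i>l_j$ is positive, or the partition is a single positive object; in both cases the maximum is positive.

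\textbf{Step 2: global maximality.} I will show that $I[LM_j,r_j]$ is a valid maximal interval in the iterative Ruzzo--Tompa sense. The route I would take is the standard characterization: an interval $I$ is a maximal interval if and only if (i) every proper prefix and every proper suffix of $I$ has positive sum, so that no nonempty prefix or suffix of $I$ has non-positive sum, and (ii) no proper superinterval of $I$ satisfies (i) with a larger sum. Condition (i) splits into two checks.
\begin{itemize}
\item \emph{Prefixes.} For a prefix $I[LM_j,t]$ with $t<r_j$, we have $S[t+1,r_j]=M[t+1]>0$. I also need the complementary facts for suffixes.
\item \emph{Suffixes.} For a suffix $I[u,r_j]$ with $u>LM_j$, we have $S[LM_j,u-1]=M[LM_j]-M[u]$. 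This is $\ge 0$ by the choice of $LM_j$ as the argmax, and it must be strictly positive to respect the zero-sum tie rule. So I will take $LM_j$ to be the rightmost argmax, consistent with the strict comparison in Algorithm~\ref{algo: MSP algorithm} (which updates only when $M[i]>M[LM]$ during the leftward scan).
\end{itemize}
For the prefix side with the tie rule, I need $S[LM_j,t]>0$, that is, $M[LM_j]>M[t+1]$. This holds because the argmax is taken over the whole partition, so if $M[t+1]=M[LM_j]$ then $t+1$ would be a later argmax.

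\textbf{Step 3: ruling out a superinterval, the main obstacle.} Suppose $I'=I[a,b]\supsetneq I[LM_j,r_j]$ with all prefixes and suffixes positive. By Theorem~\ref{theo:mii overlaps}, and more directly by Lemma~\ref{lemma:right-bound alignment}, $I'$ cannot cross $r_j$ profitably. If $b>r_j$, then $S[r_j+1,b]\le M[r_j+1]\le 0$, so $I'$ has a non-positive suffix, a contradiction. Hence $b=r_j$ and $a<LM_j$. If $a<l_j$, the same argument applied to the preceding partition boundary gives a contradiction, since $S[a,l_j-1]$ would be a prefix ending at a cutting boundary with $M[l_j]\le 0$. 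If $a\in[l_j,LM_j)$, then $S[a,LM_j-1]=M[a]-M[LM_j]\le 0$, which is a non-positive prefix, again a contradiction. Therefore no such $I'$ exists.

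To connect this to the iterative definition, I would argue inductively: the first maximal interval chosen globally that intersects $I[LM_j,r_j]$ must be contained in $P_j$ (by Theorem~\ref{theo:mii overlaps}) and must have the largest sum among intervals inside $P_j$. By Step 1 and the tie rule, this is exactly $I[LM_j,r_j]$. Removing intervals from other partitions does not affect sums inside $P_j$. The delicate point is making this induction rigorous when intervals are removed, and I expect that to be where most of the care goes.
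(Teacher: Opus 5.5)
\textbf{The case $a<l_j$ in Step~3 fails as written.} You claim the prefix $I[a,l_j-1]$ of $I'$ has non-positive sum because $M[l_j]\le 0$. But $M[l_j]\le 0$ only constrains sums \emph{starting} at $l_j$. The prefix lies in the preceding partition and can be large. For example, $(5,-10,3)$ gives $M=(5,-7,3)$, a single cutting point at $2$, $P_2=[2,3]$ and $LM_2=3$. For $I'=I[1,3]$ the prefix $I[1,1]$ has sum $5>0$.

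The correct witness is the suffix $I[l_j,r_j]$ of $I'$. It is nonempty and proper since $a<l_j$. Moreover, $a<l_j$ forces $l_j>1$, so $l_j$ is a cutting point with $E[l_j]=r_j$. Hence $S[l_j,r_j]=M[l_j]\le 0$, which violates (i). With this replacement Step~3 holds.

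Steps~1--2 are fine, although your bullet labels are swapped. The ``Prefixes'' bullet actually proves the proper suffixes $I[t+1,r_j]$ are positive, and the ``Suffixes'' bullet proves the proper prefixes $I[LM_j,u-1]$ are. The whole argument then goes through once you cite Ruzzo--Tompa's theorem that P1/P2 characterize the iteratively defined maximal intervals. The strictness in P1 is exactly the paper's zero-sum prefix/suffix rule.

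\textbf{Your closing inductive paragraph.} Given that citation, it is superfluous. If you prefer the direct route through Theorem~\ref{theo:mii overlaps}, the ``delicate point'' is short.
\begin{itemize}
\item Until some chosen interval meets $I[LM_j,r_j]$, that interval stays available. So every newly chosen interval has sum at least $M[LM_j]$.
\item Any $I[a,b]$ with $l_j\le a\le b<r_j$ has $S[a,b]=M[a]-M[b+1]<M[a]\le M[LM_j]$, because $M[b+1]>0$. Hence nothing inside $P_j$ that avoids $I[LM_j,r_j]$ is chosen first.
\item Let $I[a,b]$ be the first chosen interval meeting $I[LM_j,r_j]$. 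It lies in $P_j$ by Theorem~\ref{theo:mii overlaps}, so $S[a,b]\le M[a]\le M[LM_j]$ holds with equality.
\item The strict bound above then forces $b=r_j$. Also $M[a]=M[LM_j]$ gives $a\le LM_j$ by the rightmost-argmax choice.
\item $a<LM_j$ is excluded, because $I[a,LM_j-1]$ would be a zero-sum prefix. So $a=LM_j$.
\item Such an interval is eventually chosen, since the process stops only when no valid positive interval remains.
\end{itemize}
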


In addition, the algorithm uses a list $\mathcal{C}$ to store the indices of all cutting points. These cutting points help determine how partitions evolve
after the query window slides, as detailed in Section~\ref{sec:Update Partitions}. Finally, we emphasize that the partitions generated by this process adhere strictly to the definition given in Definition~\ref{defn:partition}. Specifically, as established by Lemma~\ref{lemma:I_i}, the right boundary of each partition $P=[l_i, r_i]$ is correctly positioned at $r_i$, i.e., $\forall j\in[l_i, r_i]$, $E[j]=r_i$. The proof is omitted, as it can be derived directly from the definition of cutting points. This property is precisely why Algorithm~\ref{algo: MSP algorithm} does not need to maintain $E[\cdot]$ array explicitly.

\begin{lemma}\label{lemma:I_i}
In a non-degenerate partition $P_i=[l_i, r_i]$ constructed by Algorithm~\ref{algo: MSP algorithm}, $r_i$ is $P_i$'s right boundary, i.e., $\forall j\in[l_i,r_i]$, $E[j]=r_i$. 
\end{lemma}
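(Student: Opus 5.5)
We need to show that for every $j\in[l_i,r_i]$ of a non-degenerate partition $P_i$, $E[j]=r_i$. The plan is a backward induction on $j$, starting from $j=r_i$ and moving left to $j=l_i$, using only the recurrence~(\ref{equ:dp}) and the way Algorithm~\ref{algo: MSP algorithm} delimits partitions.

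\textbf{Base case $j=r_i$.} By construction, position $r_i+1$ is either a cutting point, so $M[r_i+1]\le 0$, or $r_i=n$. If $r_i=n$, the base case of the recurrence gives $E[n]=n$. Otherwise the second branch of~(\ref{equ:dp}) applies at $i=r_i$, yielding $E[r_i]=r_i$.

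\textbf{Inductive step $l_i\le j<r_i$.} Assume $E[j+1]=r_i$. Position $j+1$ lies strictly inside the partition, so it is not a cutting point, since the algorithm cuts a new partition exactly when it meets one. Hence $M[j+1]>0$. The first branch of~(\ref{equ:dp}) then gives $E[j]=E[j+1]=r_i$.

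\textbf{Degenerate endpoint.} The point needing care is the left endpoint $l_i$, which may itself be a cutting point with $M[l_i]\le 0$. That does not affect $E[l_i]$, because $E[l_i]$ is determined by $M[l_i+1]$, not by $M[l_i]$. Only positions $l_i+1,\dots,r_i$ need to be non-cutting points, and they are by construction. The one real case distinction is a partition consisting of a single object at a cutting point. If that object is non-positive, it is degenerate and excluded. If it is positive, the base case already covers it.

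\textbf{Link to Definition~\ref{defn:partition}.} To confirm that $r_i$ is a genuine right boundary in that sense, recall from the recurrence that $S[j,E[j]]=M[j]=\max_{t\ge j}S[j,t]$. So $E[j]=r_i$ for every $j\in[l_i,r_i]$ gives the maximizing property, and in particular the condition at $j=l_i$.
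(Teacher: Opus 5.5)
Your proof is correct and follows the paper's approach: the paper omits the proof, saying the lemma follows directly from the definition of cutting points, and your backward induction on $j$ is that argument written out ($r_i+1$ is a cutting point or $r_i=n$, which gives $E[r_i]=r_i$; the non-cutting points $l_i+1,\dots,r_i$ have $M>0$, which gives $E[j]=E[j+1]$). One small aside: a singleton partition $[l,l]$ with $l$ a genuine cutting point always has $o_l=M[l]\le 0$, so your ``positive singleton'' case can only occur for $l_i=1$ — and your base case covers every singleton anyway.
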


The example in Figure~\ref{fig:constructing partitions} demonstrates this process. The right boundary $rc$ is initialized to $20$, making position 19 the initial right boundary of the initial scanned partition. Since $M[19]<0$, the segment $[19,20)$ is identified as a degenerate partition (containing only $o_{19}$) and is discarded. 
The algorithm then processes a non-degenerate partition. Upon scanning the next cutting point at position 14, a non-degenerate partition $P_4=[14,19)$ is recorded. $LM_4 = 17$, as $M[17]$ is the maximum in this partition. Partitions $P_3$ and $P_2$ are identified similarly. Finally, the leftmost partition $P_1=[1,4]$ is verified as non-degenerate and recorded, with $LM_1=3$.

\subsection{Partition-based $k$-Maximal Interval Search}\label{sec:Partition-based search}

In a streaming context, the query window evolves continuously. A naive method would compute all maximal intervals for each new window, which is computationally expensive. Our key observation is that the top-$k$ results in any window can originate from at most $k$ partitions. This follows from the fact that each maximal interval lies entirely within a single partition. Since the top-$k$ maximal intervals must be mutually disjoint, these $k$ intervals can belong to at most $k$ partitions. 
This property enables effective pruning of the search space by restricting attention to a limited subset of promising partitions.
We therefore rely on the top-1 maximal interval within each partition, identified during partition construction, to determine the $k$ most promising partitions.

Algorithm~\ref{algo: topk search} presents the ``Partition-based $k$-Maximal Interval Search'' algorithm. It maintains a size-$k$ AVL tree that stores the sums of the current top-$k$ maximal intervals discovered so far. 

Initially, the sums of top-1 intervals of all partitions are inserted into the AVL tree (Lines 1-2). Those partitions $P_j$ whose top-1 intervals (e.g., the interval $I[LM_j,r_j]$) remain within this tree are considered promising. 
The algorithm then processes these promising partitions (up to $k$) in descending order of their top-1 interval's sum (Lines 3-6). For each such partition $P_j$, an evaluation is invoked to locate the remaining valid maximal intervals within the partition that are disjoint from the top-1 interval $I[LM_j,r_j]$ (Line 4). The details of this evaluation procedure are described in the next section. The maximal intervals discovered in $P_j$ are stored in a set ${CM}_j$. The sums of these newly discovered maximal intervals are inserted into the AVL tree (Lines 5-6), which continuously maintains the sums of current top-$k$ maximal intervals found so far. The algorithm terminates once all partitions whose top-1 interval sums remain among the current top-$k$ candidates have been evaluated.
Finally, the union of all such ${CM}_j$ forms the final candidate maximal interval set $\mathcal{CM}$, which will be used for subsequent window updates, and the maximal intervals whose sums remain in the final AVL tree form the final top-$k$ results (Line 7).

The results returned by Algorithm~\ref{algo: topk search} are guaranteed to be the top-$k$ maximal intervals, as stated in Lemma~\ref{lemma:exact answer}. 
We omit the proof for brevity and provide it in Lemma~\ref{lemma:exact answer} of our technical report~\cite{long-version}.

\begin{lemma}\label{lemma:exact answer}
Within a query window $W_i$, the $k$  intervals returned by Algorithm~\ref{algo: topk search} correspond to the top-$k$ maximal intervals in the window.
\end{lemma}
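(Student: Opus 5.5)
The proof has two parts. First we show that every maximal interval of $W_i$ lies inside a single non-degenerate partition. Then we show that the AVL-tree procedure cannot miss any interval that belongs in the top-$k$.

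\textbf{Step 1: maximal intervals of the window split by partition.} By Theorem~\ref{theo:mii overlaps}, every maximal interval of $W_i$ lies inside one partition. A degenerate partition holds a single non-positive object, so it contributes none. Hence the set $\mathcal{M}$ of all maximal intervals of $W_i$ is the disjoint union of the sets $\mathcal{M}_j$, where $\mathcal{M}_j$ is the set of maximal intervals lying in the non-degenerate partition $P_j$.

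We also need the intervals found inside $P_j$ to be the same as the global ones restricted to $P_j$. The Ruzzo--Tompa iterative definition removes intervals in rank order. Because no maximal interval crosses a partition boundary, removing intervals from other partitions never changes which interval is locally maximal inside $P_j$. So the ranking restricted to $P_j$ is the iterative definition applied to $P_j$ alone. Consequently the evaluation on Line~4 (which the next section shows to be correct) returns exactly $\mathcal{M}_j$, or enough of its top members. By Lemma~\ref{lemma:top1in p}, the largest element of $\mathcal{M}_j$ is $I[LM_j,r_j]$, and it is also globally maximal.

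\textbf{Step 2: pruning is safe.} Let $T$ be the true top-$k$ of $\mathcal{M}$ by sum. Suppose $x\in T$ lies in $P_j$.
\begin{itemize}
\item Then $I[LM_j,r_j]\in\mathcal{M}_j$ has sum at least $\mathrm{sum}(x)$, so it also belongs to $T$ (with ties fixed by one consistent order).
\item Fewer than $k$ elements of $\mathcal{M}$ exceed $\mathrm{sum}(I[LM_j,r_j])$.
\item The AVL tree only ever holds sums of genuine maximal intervals, so at every moment its contents form a subset of $\mathcal{M}$.
\item Therefore the sum of $I[LM_j,r_j]$ is never evicted: evicting it would require $k$ genuine intervals with larger sums.
\end{itemize}
So $P_j$ stays promising and is evaluated. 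Its evaluation inserts all members of $\mathcal{M}_j$, including $x$. By the same counting argument, $x$ is never evicted from the tree either.

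\textbf{Step 3: conclusion.} After the loop, every element of $T$ is in the tree, and the tree has size at most $k$. The tree's contents are therefore exactly $T$, unless $|\mathcal{M}|<k$; in that case every maximal interval has been inserted and is returned.

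\textbf{Main obstacle: the loop order and the termination rule.} Partitions are processed in descending top-1 sum, and a partition that was promising may later be evicted, after which it is skipped. One must check that the set of partitions actually evaluated includes every partition containing an element of $T$. The eviction-count argument in Step~2 handles this, because it holds at every moment of the run. I would state it as an explicit invariant: at every iteration, the tree holds $k$ (or all discovered) largest sums among discovered genuine maximal intervals, and every element of $T$ that has been discovered is in the tree. Handling ties needs a fixed tie-breaking rule, which is consistent with the paper's validity convention for zero-sum prefixes and suffixes.
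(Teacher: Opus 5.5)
Your proof is correct and follows essentially the paper's own pruning reasoning. Theorem~\ref{theo:mii overlaps} confines every maximal interval to one partition, and Lemma~\ref{lemma:top1in p} makes $I[LM_j,r_j]$ dominate its partition, so a partition whose top-1 is displaced from the size-$k$ tree cannot contribute; your eviction-counting invariant is the rigorous form of exactly this argument. One point to make explicit: the counting needs every tree entry to be a distinct maximal interval, so you should use that $CM_j$ holds only the intervals of $P_j$ disjoint from $I[LM_j,r_j]$ (whose sum was already inserted on Line~2), rather than ``all of $\mathcal{M}_j$'', because a re-inserted duplicate sum could otherwise displace a genuine top-$k$ member.
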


\begin{algorithm}[t]
\footnotesize
  \caption{Partition-based $k$-Maximal Interval Search 
  }
  \label{algo: topk search}
  \KwIn{Partitions $\mathcal{P}=\{P_1,P_2\cdots,P_{p}\}$ with their respective top-1 $M$ values $(M[LM_1], M[LM_2],\cdots,M[LM_p] )$, the candidate maximal interval set $\mathcal{CM}=\{{CM}_1,{CM}_2,\cdots,{CM}_p\}$, required $k$
  }
  \KwOut{The top-$k$ maximal intervals, the candidate maximal interval set.}

  Initialize an AVL tree $avl$ with the size of $k$\;
    insert all $M[LM_j]$ for $j\in[1,p]$ into $avl$\;
  \While{find $M[LM_i]$ in $avl$ in descending order}{
    incrementally update $\mathcal{CM}_i$; \tcp*[h]{Algorithm~\ref{algo: incremental maintenance Maximal Intervals}}
    
    \ForEach{maximal interval $MaxI[u,v]$ in ${CM}_i$}{
        insert the sum of $MaxI[u,v]$ into $avl$\;}
  }
  \Return{maximal intervals whose sums are in $avl$, $\mathcal{CM}$}\;
\end{algorithm}

\subsubsection{Identifying maximal intervals in a partition} \label{sec:no-max maximal in part}

Algorithm~\ref{algo: topk search} invokes an evaluation procedure to identify all remaining maximal intervals within each promising partition $P_j$; specifically, those maximal intervals disjoint from the top-1 interval $I[LM_j,r_j]$. 
To support this procedure, we first derive a necessary but not sufficient condition that any maximal interval must satisfy, as formalized in Lemma~\ref{lemma:head-tail}. Again, we omit the proof due to space constraints; the full proof is provided in Lemma~\ref{lemma:head-tail} of our technical report~\cite{long-version}.

\begin{lemma} \label{lemma:head-tail}
    Let $MaxI[u,v-1]$ be a maximal interval in the sub-range $[l_j,LM_j-1]$ of partition $P_j=[l_j, r_j]$. Then,  $M[u]> M[i]$ for all $i\in (u, v]$, and $M[v]<M[i]$ for all $i\in [u,v)$.
    Thus, $u$ is a strict local maximum and $v$ is a strict local minimum of the $M[\cdot]$ values. 
\end{lemma}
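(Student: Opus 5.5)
Within a partition $P_j=[l_j,r_j]$, Lemma~\ref{lemma:I_i} gives $E[i]=r_j$ for every $i\in[l_j,r_j]$, hence $M[i]=S[i,r_j]$. This converts interval sums into differences of $M$ values: for $l_j\le u<v\le r_j$,
\begin{equation*}
S[u,v-1]=S[u,r_j]-S[v,r_j]=M[u]-M[v].
\end{equation*}
I would use this identity throughout and then argue both inequalities by contradiction, exploiting two properties of a maximal interval $MaxI[u,v-1]$. First, it contains no non-empty prefix or suffix of non-positive sum; a negative one could be dropped to increase the sum, and a zero one is excluded by the tie rule. Second, it is maximal among the elements not covered by higher-ranked intervals. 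Since $v-1\le LM_j-1$, we have $v\le LM_j\le r_j$, so the identity applies to all indices $i\in[u,v]$.

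For the first claim, $M[u]>M[i]$ for $i\in(u,v]$: if $i=v$, then $M[u]-M[v]=S[u,v-1]>0$ because maximal intervals have positive sum. If $u<i<v$, then $M[u]-M[i]=S[u,i-1]$, which is a proper non-empty prefix of the interval. If this prefix had sum $\le 0$, removing it would give a suffix $I[i,v-1]$ with sum $\ge S[u,v-1]$, contradicting either maximality (strict case) or the tie rule against zero-sum prefixes. Hence every such prefix is positive and $M[u]>M[i]$.

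For the second claim, $M[v]<M[i]$ for $i\in[u,v)$: we have $M[i]-M[v]=S[i,v-1]$, a non-empty suffix of the interval (the whole interval when $i=u$). By the symmetric argument, each such suffix has positive sum, so $M[i]>M[v]$. Together these show that $u$ is a strict local maximum and $v$ a strict local minimum of $M[\cdot]$ on $[u,v]$.

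The main obstacle is justifying the "no non-positive prefix or suffix" property under the iterative definition. The dropped part $I[u,i-1]$ is disjoint from the higher-ranked intervals because it lies inside $I[u,v-1]$, which is itself disjoint from them. So the shortened interval $I[i,v-1]$ is an admissible candidate at the same rank. It therefore either beats $I[u,v-1]$ strictly, or ties it and thereby exposes a zero-sum prefix, which the tie rule forbids. I would state this as a short auxiliary observation before the two inequality arguments.
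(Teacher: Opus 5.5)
Your argument is correct and matches the natural route for this lemma (the paper defers its proof to the technical report): inside $P_j$ you use $M[i]=S[i,r_j]$ (Lemma~\ref{lemma:I_i}) to write $S[u,v-1]=M[u]-M[v]$, which is valid because $v\le LM_j\le r_j$. Each inequality then reduces to positivity of a proper prefix $S[u,i-1]$ or suffix $S[i,v-1]$, where a negative part contradicts maximality at that rank and a zero part violates the tie rule. One cosmetic point: $I[i,v-1]$ is admissible because the \emph{retained} part lies inside $I[u,v-1]$ and so avoids all higher-ranked intervals; the disjointness of the dropped part is not what matters.
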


Lemma~\ref{lemma:head-tail} states that $I[u, v-1]$ can be a maximal interval only if the corresponding $M$-subsequence $M[u,v]$ has $M[u]$ as its strict maximum and $M[v]$ as its strict minimum. Any $M$-subsequence satisfying this condition is defined as a \emph{Potential Target $M$-Subsequence}
(pTMS for short). All other $M$-subsequences can therefore be safely filtered out. However, not every pTMS corresponds to a maximal interval. For instance, in Figure~\ref{fig:non-maximum}, $M[5,7]$ is a pTMS, yet $I[5,6]$ is not a maximal interval. Thus, while pTMS provides a filtered candidate set, additional refinement is needed. 

We therefore seek a \emph{Target $M$-Subsequence} (TMS for short), i.e., an $M[u,v]$ whose corresponding interval $I[u,v-1]$ is indeed a maximal interval. The key observation underlying this refinement is the mergeability property of pTMS. Two pTMS, $M[u_1,v_1]$ and $M[u_2,v_2]$ (with $v_1<u_2$) are mergeable if their concatenation $M[u_1,v_2]$ also constitutes a  pTMS. Based on this notion, we establish a key result (Lemma~\ref{lemma:non-mergeable}):  a pTMS is a true TMS if and only if it is non-mergeable with any other pTMS in the sequence. Again, we omit the proof for brevity and provide it in Lemma~\ref{lemma:non-mergeable} of our technical report~\cite{long-version}.

\begin{figure}[t]
    \centering
            \includegraphics[width=\linewidth]{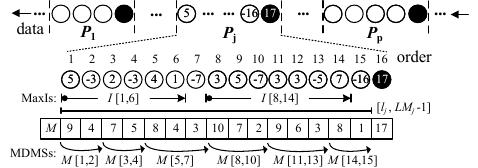}   
            \caption{An example of maximal intervals in range $[l_j, LM_j-1]$ in $P_j$ where $l_j=1$ and $LM_j=16$.}
\label{fig:non-maximum}
\end{figure}

\begin{definition}\label{defn:mergeable}
    \textbf{Mergeability of two pTMS.} Let $M[u_1,v_1]$ and $M[u_2, v_2]$ be two pTMS in a partition $P_j=[l_j,r_j]$ such that $l_j\le u_1<v_1<u_2<v_2< LM_j$. They are mergeable iff the concatenated sequence $M[u_1, v_2]$ itself constitutes a valid pTMS (i.e., $M[u_1]$ is the strict maximum and $M[v_2]$ is the strict minimum over the extended range $[u_1,v_2]$). Otherwise, the two pTMS are non-mergeable. 
\end{definition}

\begin{lemma} \label{lemma:non-mergeable}
    A pTMS within $[l_j,LM_j-1]$ is a TMS iff it is non-mergeable with any other pTMS in $[l_j,LM_j-1]$.
\end{lemma}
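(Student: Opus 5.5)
The plan is to reduce everything to differences of $M$ values and then recover a Ruzzo--Tompa style characterization. By Lemma~\ref{lemma:I_i}, every $i\in[l_j,r_j]$ has $E[i]=r_j$, so $M[i]=S[i,r_j]$. Hence for $l_j\le u<v\le r_j$,
\[
S[u,v-1]=M[u]-M[v].
\]
An interval $I[u,v-1]$ in $[l_j,LM_j-1]$ is therefore encoded by the $M$-subsequence $M[u,v]$. Being a pTMS means precisely the following: $M[u]>M[i]$ for $i\in(u,v]$ says every nonempty prefix $S[u,i-1]$ is strictly positive, and $M[i]>M[v]$ for $i\in[u,v)$ says every nonempty suffix $S[i,v-1]$ is strictly positive.

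\textbf{Step 1.} I will translate the iterative definition of maximal intervals, with the zero-sum tie rule, into a structural one. The claim is that $I[u,v-1]$ is a maximal interval iff it is ``locally optimal'' and no strictly larger locally optimal interval contains it. Locally optimal means all proper prefixes and suffixes have positive sum, i.e.\ $M[u,v]$ is a pTMS. This is the classical Ruzzo--Tompa characterization. I would either cite it or reprove it by induction on rank. The top-ranked interval is locally optimal and contains no larger such interval. Removing it splits the sequence, and a locally optimal interval crossing a removed interval can be shown to contain it.

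Restricting to $[l_j,LM_j-1]$ is harmless. Theorem~\ref{theo:mii overlaps} confines maximal intervals to $P_j$. By Lemma~\ref{lemma:top1in p}, $I[LM_j,r_j]$ is itself maximal. Also $M[LM_j]$ dominates all $M$ values in $P_j$, so no pTMS starting left of $LM_j$ can extend past it while keeping its left endpoint as a strict maximum. After this step, the lemma reads: a pTMS is a TMS iff it is not strictly contained in another pTMS within $[l_j,LM_j-1]$.

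\textbf{Step 2, ($\Rightarrow$).} Suppose $M[u,v]$ is mergeable, say with $M[u_2,v_2]$ to its right (the left case is symmetric). Then $M[u,v_2]$ is a pTMS strictly containing $M[u,v]$, so by Step 1 the interval $I[u,v-1]$ is not maximal.

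\textbf{Step 3, ($\Leftarrow$), contrapositive.} Assume $M[u,v]$ is strictly contained in a pTMS $M[a,b]$, and take $[a,b]$ maximal under containment. I split into cases.

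If $a=u$ then $b>v$. Choose $u_2$ as the rightmost position in $(v,b)$ where $M$ attains its maximum over $(v,b)$, and set $v_2=b$. Then $M[u_2,v_2]$ is a pTMS: $M[u_2]$ is a strict maximum on $[u_2,b]$ by the rightmost choice together with $M[u_2]>M[b]$, and $M[b]$ is a strict minimum because $M[a,b]$ is a pTMS. The concatenation $M[u,b]=M[a,b]$ is a pTMS, so the two are mergeable.

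If $b=v$ then $a<u$, and I argue symmetrically using the leftmost position attaining the minimum of $M$ over $(a,u)$.

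If $a<u$ and $v<b$, I merge on whichever side works, or show that one of the intermediate subsequences $M[a,v]$ or $M[u,b]$ is already a pTMS, which reduces to the previous cases.

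\textbf{Main obstacle.} The hardest part is the degenerate boundary configurations in Step 3. The clearest instance is $b=v+1$ (or $a=u-1$), where $(v,b)$ is empty and there is no room for a second pTMS with $v<u_2$. There I would first try to push the merge to the other side, using maximality of $[a,b]$ and the strict inequalities. Failing that, I would show the configuration contradicts the assumption that $M[u,v]$ is a pTMS, via $M[v]>M[v+1]$ combined with the strict-maximum conditions on $M[a]$.

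A secondary, but genuine, difficulty is Step 1. There one must check carefully that the zero-sum tie convention makes ``strict'' the correct notion, so that the iterative ranking and the containment characterization coincide.
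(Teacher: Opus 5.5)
Your Step~1 reduction and your ($\Rightarrow$) direction are sound. With $S[u,v-1]=M[u]-M[v]$, a pTMS is exactly an interval whose nonempty prefixes and suffixes are all positive. The Ruzzo--Tompa characterization then turns ``TMS'' into ``pTMS not strictly contained in another pTMS in $[l_j,LM_j-1]$'', and a merge produces such a strict container. Your subcase $a=u$, $b\ge v+2$ is also correct, as is its mirror image.

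The gap is the rest of Step~3, and it cannot be closed. Under Definition~\ref{defn:mergeable}, a merge keeps the left end of one piece and the right end of the other, with $u_1<v_1<u_2<v_2$. So a pTMS sitting strictly inside a larger pTMS need not be mergeable with anything.

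Concretely, take the window $o=(9,-4,2,-5,8,-19,20)$. It is a single partition with $M[1..7]=(11,2,6,4,9,1,20)$ and $LM_j=7$. Its maximal intervals are $I[7,7]$ and $I[1,5]$, and the pTMS in $[1,6]$ are $M[1,2]$, $M[3,4]$, $M[5,6]$, $M[1,6]$. The pTMS $M[3,4]$ (interval $I[3,3]$) is not a TMS, yet it is non-mergeable:
a left partner would force $M[4]=4$ to be the strict minimum of a range containing $M[2]=2$;
a right partner would force $M[3]=6$ to be the strict maximum of a range containing $M[5]=9$.
This is precisely your case $a<u$, $v<b$ with $[a,b]=[1,6]$. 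Neither $M[1,4]$ nor $M[3,6]$ is a pTMS, so the reduction you promise does not exist. This example survives even if adjacent pieces $v_1=u_2$ were allowed.

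Your degenerate case $b=v+1$ yields no contradiction either. For $o=(2,1,-4,-14,20)$ we get $M[1..5]=(5,3,2,6,20)$. The pTMS $M[1,2]$ lies strictly inside the pTMS $M[1,3]$, so $I[1,1]$ is not maximal ($I[1,2]$ is). But its only possible partner, $M[3,4]$, is not a pTMS.

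So the ($\Leftarrow$) direction, read with the paper's definitions, is false, and no case analysis will rescue it. What your Step~1 does prove is the containment version: a pTMS in $[l_j,LM_j-1]$ is a TMS iff no other pTMS in $[l_j,LM_j-1]$ strictly contains it. This is also what Algorithm~\ref{algo:non-maximum} actually uses: after a merge it deletes every list element inside the merged range, not only the two merged pieces. In the paper's own walk-through, $M[3,4]$ is dropped because it is swallowed by $M[1,7]$, not because it merges with anything.

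To obtain a correct statement, you have two options:
replace ``non-mergeable'' by ``not strictly contained in another pTMS''; or
restrict the claim to the elements that survive in the list $L$, and prove that those survivors are exactly the containment-maximal pTMS.
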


Guided by Lemma~\ref{lemma:non-mergeable}, our task can be reduced to identifying non-mergeable pTMS. 
Directly determining non-mergeability of a pTMS $M[u_1,v_1]$ during a scan of a $M$ sequence requires examining all future $M$ values, leading to repeated scans. 
To avoid this inefficiency, 
we introduce \emph{Maximal Monotonically Decreasing $M$-SubSequence} (MDMS for short), the longest contiguous subsequences where $M$ values directly decrease (e.g., $M[1, 2]$ and $M[3, 4]$ in Figure~\ref{fig:non-maximum}). MDMS plays a crucial role in our evaluation procedure because it has three useful properties: i) each MDMS is itself a pTMS; ii) all MDMS can be identified in a single scan of the sequence, and iii) MDMS can be iteratively merged according to Definition~\ref{defn:mergeable} to eventually produce the desired TMS.

\begin{algorithm}[t]
\footnotesize
  \caption{Identify Maximal Intervals
  }
  \label{algo:non-maximum}
  \KwIn{Array $M[l_a \dots LM_a-1]$}
  \KwOut{Maximal intervals}
  Initialize an empty double linked list $L$ with a dummy head\;
    
  \While{traverse the array $M$ from $M[l_a]$ to $M[LM_a-1]$}{
    scan a Maximal Monotonically Decreasing $M$ SubSequence $M[i, j]$\;
    
    $b\gets |L|-1$; $L.append(M[i,j])$\;
    \While (\tcp*[h]{traverse $L$ for mergeability checks}) {$b\ge 0$ }
    {
        $M[u,v]\gets L[b]$\;
    	\If{$M[u]>M[i]\land M[v]>M[j]$
        \tcp*[h]{mergeable}}
    	{
            $L[b]\gets M[u,j]$; $L[b].next\gets NULL$\;
            $b\gets b-1$; $M[i,j]\gets M[u,j]$\;
            }
        \ElseIf{$M[u]\le M[i]\land M[v]> M[j]$}
        {
            $L.M[i,j].prev \gets L[b].prev$; $b\gets b-1$\;
        }
        \Else(\tcp*[h]{$M[v] \le M[j]$})
        {
            $L.M[i,j].prev \gets L[b]$; 
            break; \tcp*[h]{stop traversing $L$}\\
        }
    }
    }
  
  \Return elements of $L$\;
\end{algorithm}

Based on these observations, we present the procedure to identify maximal intervals in a given $M$ sequence, as outlined in Algorithm~\ref{algo:non-maximum}. The algorithm scans the $M$ sequence in each partition $P_a$ from $l_a$ to $LM_a-1$, identifying MDMS on the fly (Lines 2-13). During the scan, we maintain an ordered list $L$ that stores current pTMS candidates. Whenever a new MDMS $M[i,j]$ is identified, it is appended to the tail of $L$ (Lines 3-4). 
We perform mergeability checks to examine whether it can merge with existing pTMS in $L$, examining them sequentially from the tail (the element in $L$ immediately preceding $M[i,j]$) toward the head (Lines 5-13). If a merge is possible, the two subsequences are merged to form a longer pTMS; the new and longer pTMS becomes the new tail so all pTMS contained within the merged range are removed from $L$. We then continue checking whether the newly formed pTMS can be further merged with earlier elements in $L$. 
Once this local mergeability checks complete, the tail pTMS (originating from the identified MDMS) is guaranteed to be non-mergeable with any other pTMS in $L$. The algorithm then proceeds to identify the next MDMS and repeats the mergeability checks. 
After all MDMS are processed, the final list $L$ contains the desired TMSs, each corresponding to a maximal interval.

\begin{figure}[t]
    \centering
            \includegraphics[width=\linewidth]{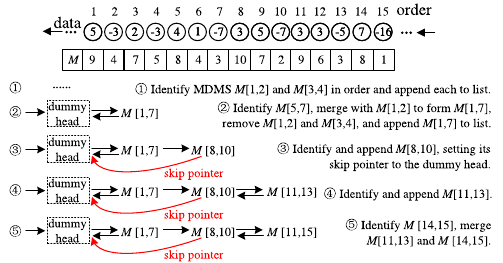}   
            \caption{Build a skip-pointer when the algorithm scans $M[8, 10]$.
            }
    \label{fig:pointer}
\end{figure}

Figure~\ref{fig:pointer} shows this process when MDMS $M[5,7]$ is scanned (Step \textcircled{2}). The list $L$ contains $M[1, 2]$ and $M[3, 4]$. The algorithm checks its mergeability with these candidates from tail to head sequentially.
While $M[5, 7]$ cannot merge with $M[3, 4]$, it is mergeable with $M[1, 2]$, forming the new, longer pTMS $M[1, 7]$. This new pTMS now fully encompasses the range of $M[3, 4]$, making it obsolete. Consequently, the list is updated to $M[1, 7]$.

A naive implementation of mergeability checks that examines each new MDMS against every pTMS in $L$ is inefficient, as it performs many unnecessary checks. We optimize it with a \emph{skip-pointer} mechanism, based on two key insights: let $M[u,v]$ and $M[i,j]$ be two pTMSs in $L$ with $v<i$,
1) if $M[v] \le M[j]$, $M[i, j]$ \emph{cannot} merge with $M[u,v]$ and any earlier pTMS in $L$, as the merged sequence would violate the requirement that its last element be the strict minimum $M$ value; 2) if $M[u] \le M[i]$, $M[u, v]$ \emph{cannot} merge with $M[i,j]$ or any subsequent pTMSs in $L$, since the merged sequence would violate the requirement that the first element of a pTMS must have the strict maximum $M$ value over the merged range.

Consequently, the mergeability of a pTMS (a new MDMS $M[i,j]$ or one formed by a previous merge) and an existing pTMS $M[u,v]$ in $L$ is determined by evaluating the extended range $M[u,j]$. This leads to three cases based on whether $M[u]$ is the strict maximum and $M[j]$ is the strict minimum in the extended range $M[u,j]$. 

\begin{itemize}[leftmargin=*]
\item{

Case i): both conditions hold (Lines 7-9). This sequence is merged into a new pTMS $M[u,j]$. All pTMS contained within this new range are removed from $L$, and the algorithm continues checking the new pTMS against the remaining elements in $L$.
}

\item{

Case ii): only $M[j]$ is minimal (Lines 10-11). $M[u,v]$ can not merge with $M[i,j]$ and future pTMS located after $M[i,j]$. However, $M[i,j]$ or future pTMSs $M[i',j']$ with $i'>j$ may be able to merge with pTMS $M[u',v']$ located before $M[u,v]$ in $L$ if $M[u']$ has the strict maximum $M$ value while $M[j]$ or $M[j']$ has the strict minimum $M$ value in the range $[u', j]$ or $[u', j']$. Consequently, only $M[u,v]$ can be bypassed in this mergeability evaluation procedure, and $\texttt{prev}$ pointer of $M[i,j]$ is set to that of $M[u,v]$, and the algorithm continues checking $M[i,j]$ against the preceding element in $L$.
}

\item{

Case iii): $M[j]$ is not minimal (Lines 12-13). $M[i,j]$ cannot merge with $M[u,v]$ or any pTMS located before $M[u,v]$ in $L$. Consequently, we can safely terminate the mergeability checks of $M[i,j]$. In addition, $\texttt{prev}$ pointer of $M[i,j]$ is set to $M[u,v]$.
}

\end{itemize}

In summary, we implement both insights in above mergeability check procedure, and we re-purpose the $\texttt{prev}$ pointer in the ordered list $L$. For any pTMS $M[i,j]$ in $L$, its $\texttt{prev}$ pointer does not simply point to the immediate predecessor in $L$. Instead, it points to the nearest preceding pTMS $M[u,v]$ in $L$ such that $M[v]\le M[j]$, or to the dummy head element if no such preceding pTMS exists. This pointer effectively creates a ``skip list'' that allows the algorithm to bypass directly all intervening pTMS that are guaranteed to be unmergeable due to the fact that their initial $M$ value is not the strict maximum and their last $M$ value is not the strict minimum.

For example, as shown in Figure~\ref{fig:pointer}, when the scanning MDMS $M[8,10]$ against a list containing $M[1, 7]$, its \texttt{prev} pointer links to the dummy head. This instantly tells that $M[1, 7]$ cannot merge with any subsequent pTMS, due to the existence of $M[8,10]$.
When MDMS $M[11, 13]$ is formed, we only need to check whether it is mergeable with $M[8,10]$, bypassing the check against $M[1,7]$.

\noindent
\textbf{Complexity analysis}. Algorithm~\ref{algo: MSP algorithm} scans the $n$ objects in the query window to construct $|\mathcal{P}|$ partitions (where $|\mathcal{P}|<n$). This costs $\mathcal{O}(n)$ time and $\mathcal{O}(n)$ space. Algorithm~\ref{algo: topk search} first spends $\mathcal{O}(|\mathcal{P}|\cdot log~k)$ time to find top-$k$ promising partitions. It then traverses the AVL tree, which costs $\mathcal{O}(k \cdot log~k)$ time. During this process,  Algorithm~\ref{algo:non-maximum} is called to identify maximal intervals within each partition. Assuming partitions are distributed uniformly, each partition $P_i$ contains $\tfrac{n}{|\mathcal{P}|}$ objects and $|CM_i|$ 
maximal intervals. Algorithm~\ref{algo:non-maximum} is linear, so identifying them costs $\mathcal{O}(\tfrac{n}{|\mathcal{P}|})$ time. They participate in comparison, requiring $\mathcal{O}(|CM_i|\cdot log~k)$ time. The total time complexity is $\mathcal{O}\big(n + |p|\cdot log~k + k\cdot log~k + k \cdot (\tfrac{n}{|\mathcal{P}|}+|CM_i|\cdot log~k)\big)$ $= \mathcal{O}\big((1+\tfrac{k}{|\mathcal{P}|})\cdot n + (|P|+k+|CM|)\cdot log~k\big)$. 

For space complexity, Algorithm~\ref{algo: MSP algorithm} maintains $n$ $M$ values, $|\mathcal{P}|$ partitions including their boundaries, $\mathcal{LM}$ set, and cutting points. Algorithm~\ref{algo: topk search} maintains an AVL tree of size $k$ and the candidate set $\mathcal{CM}$. Algorithm~\ref{algo:non-maximum} processes a partition requiring space bounded by $\mathcal{O}(\tfrac{n}{|\mathcal{P}|})$; when it completes a partition, the space can be released. Thus, the total space is bounded by $\mathcal{O}(n + |\mathcal{P}| + k + |\mathcal{CM}| + \tfrac{n}{|\mathcal{P}|})$.

\section{The Incremental Maintenance} \label{sec:inc-maint}

Our partitioning strategy is designed not only to support \query\ search within a single window but also to incrementally update the $\mathcal{LM}$ and $\mathcal{CM}$ sets\footnote{$LM_j\in \mathcal{LM}$ and $\mathcal{CM}_i\in \mathcal{CM}$ record the position of the top-1 interval and the remaining maximal intervals, if identified, in partition $P_j$, respectively.}, which are essential for identifying promising partitions and searching for results in the new window. In the following, we first explain how to update the partitions and then detail the maintenance of the $\mathcal{LM}$ and $\mathcal{CM}$ sets corresponding to the updated window.

\subsection{Update Partitions} \label{sec:Update Partitions}

Sliding the window can significantly alter partitions. As shown in Figure~\ref{fig:back propagation}, 
when $W_1$ slides to $W_2$,  partitions evolve: $P_2$ shrinks,  $P_4$ extends,
and $P_4'$ emerges.
These changes result from updated cutting points, which shift from \{$o_5$, $o_{12}$, $o_{14}$, $o_{19}$\} to \{$o_{12}$, $o_{14}$, $o_{22}$\}. A naive approach would track these changes by recomputing all affected $M$ values via a full window scan. However, this is highly inefficient, as many objects' $M$ values remain unchanged and many updated partitions do not contribute to query results. To eliminate these unnecessary computations, we propose a maintenance strategy that updates only a subset of the $M$ values.

Each time the window $W$ advances, the first $s$ objects expire and $s$ new objects arrive. We first calculate the $M$ values for the new objects, in reverse order, from $M[n+s]$ to $M[n+1]$, using Equation~\eqref{equ:dp}. If $\Delta=M[n+1]\le 0$, the update is complete, as this non-positive value blocks further propagation to preceding $M$ values. Otherwise, we propagate the update backward through the existing partitions. This backward propagation process traverses the partitions in reverse order, starting from the last partition $P_p$ in $W_1$. For each partition $P_i=[l_i,r_i]$, if $\Delta_i=M[r_{i}+1]>0$, we add  $\Delta_i$ to all $M$ values within $P_i$. We then proceed to partition $P_{i-1}$. The propagation terminates at the first partition $P_j$ where $\Delta_j=M[r_j+1]\le 0$, as this cutting point prevents the cumulative impact from propagating further. Note that all partitions, including degenerate partitions (which can be easily recovered from cutting points), participate in this process. 

\begin{figure}[t]
    \centering
            \includegraphics[width=\linewidth]{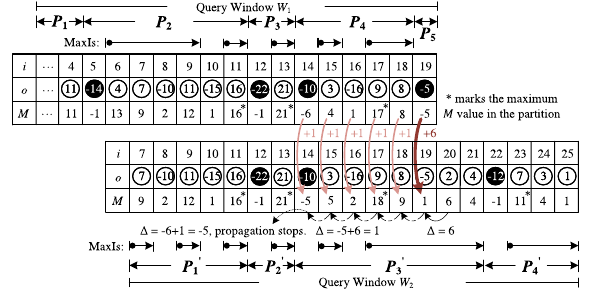}   
            \caption{An example of window sliding from $W_1$ to $W_2$.
            }
    \label{fig:back propagation}
\end{figure}

Figure~\ref{fig:back propagation} illustrates this process. When $W_1$ advances to $W_2$ by $s=6$ objects, we first calculate the $M$ values for the new objects, i.e., $M[20,25]$. Since $\Delta=M[20]=6>0$, we propagate the update backward through the partitions: 1) Partition $P_5$: the $M$ value is increased by $\Delta=6$. 2) Partition $P_4$: As $\Delta_4=M[19]=1>0$, propagation continues and $M$ values are increased by $\Delta_4$. 
3) Partition $P_3$: Since $\Delta_3=M[14]=-5$, the propagation terminates, leaving $M$ values for objects in partitions $P_3$, $P_2$, and $P_1$ unchanged. 

We want to highlight that when propagating the update backward through the existing partitions, $\Delta$ value reduces monotonically, as stated in Lemma~\ref{lemma:delta-decrease}. The proof is skipped, as it can be derived by the fact that $\Delta_i=\Delta_{i+1}+M[r_i+1]$ and $M[r_i+1]=M[l_{i+1}]<0$ (position $l_{i+1}$ is a cutting point). In addition, a non-cutting point in the original window will not become a cutting point in the updated window, as the $M$ values of objects in a partition $P_i$ change only when the corresponding $\Delta_i>0$. A non-cutting point has a positive $M$ value, and this value can only increase (or remain unchanged) during backward propagation. Consequently, it cannot become a cutting point, as stated in Lemma~\ref{lemma:cuttingpoint}. 

\begin{lemma}\label{lemma:delta-decrease}
Let $P_i$ and $P_{i+1}$ be consecutive partitions whose $M$ values are updated by backward propagation with changes $\Delta_{i}$ and $\Delta_{i+1}$ respectively. Then, $\Delta_{i}<\Delta_{i+1}$.
\end{lemma}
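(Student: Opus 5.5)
We prove Lemma~\ref{lemma:delta-decrease} by tracking how a shift $\Delta_{i+1}$ in the $M$ value at the left end of $P_{i+1}$ is passed across the cutting point into $P_i$. The claim to establish is the one-step relation
\[
\Delta_i=\Delta_{i+1}+M[r_i+1],
\]
where $M[r_i+1]$ denotes the \emph{old} (pre-slide) $M$ value at the cutting point $l_{i+1}=r_i+1$. Once this holds, $M[l_{i+1}]\le 0$ (Definition~\ref{defn: cutting Point}) gives $\Delta_i\le\Delta_{i+1}$ immediately.

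\textbf{Step 1: the increment inside a partition is uniform.} Let $\delta$ denote the change of the new $M$ value at the right end $r_{i+1}+1$ of $P_{i+1}$, that is, the change coming in from the right. Apply the recurrence~\eqref{equ:dp} backward from $r_{i+1}$ to $l_{i+1}+1$. Every position $t\in(l_{i+1},r_{i+1}+1]$ is a non-cutting point, so its old $M$ value is positive. If $\delta>0$, it stays positive after adding $\delta$, so the same branch $M[t-1]=o_{t-1}+M[t]$ is taken as before. By induction, every position of $P_{i+1}$, including $l_{i+1}$, receives the same increment $\delta$. This is the quantity $\Delta_{i+1}$ being added to the $M$ values of $P_{i+1}$.

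\textbf{Step 2: crossing the cutting point.} The new value at $l_{i+1}$ is $M[l_{i+1}]+\Delta_{i+1}$. Propagation into $P_i$ continues exactly when this quantity is positive. In that case $r_i$ now takes the branch $o_{r_i}+M_{\text{new}}[l_{i+1}]$, whereas before it took the branch $o_{r_i}$ alone. The increment at $r_i$ is therefore $\Delta_i=M[l_{i+1}]+\Delta_{i+1}$, and Step~1 spreads it uniformly over $P_i$. This is the relation above, and it matches the paper's reading of ``$\Delta_i=M[r_i+1]$'' evaluated with the updated value.

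\textbf{Main obstacle: strictness.} The relation only yields $\Delta_i<\Delta_{i+1}$ if $M[l_{i+1}]<0$, while cutting points only guarantee $M[l_{i+1}]\le 0$. The zero case has to be examined directly. If $M[l_{i+1}]=0$, the increment passes through unchanged and $\Delta_i=\Delta_{i+1}$, so strict inequality fails. I would handle this in one of two ways.
\begin{itemize}
\item The preferred fix is to note that, in the concrete partitions produced by Algorithm~\ref{algo: MSP algorithm}, this is the only way equality can occur, and to state the lemma with $\le$. Monotone non-increase is all that the termination argument for the backward propagation needs.
\item Alternatively, impose a genericity or tie-breaking convention on zero sums, in the spirit of the zero-sum-prefix exclusion in Section~\ref{sec:problem def}. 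This forces $M[l_{i+1}]<0$ and gives the strict version.
\end{itemize}
Apart from this point, the argument is a routine induction on the recurrence.
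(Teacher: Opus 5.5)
This is the paper's argument. Its whole justification is the identity $\Delta_i=\Delta_{i+1}+M[r_i+1]$ together with the sign of $M[l_{i+1}]$, which is what your Steps~1--2 establish.

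Your strictness objection is correct, and it exposes a flaw in the paper's own sketch. The paper asserts $M[l_{i+1}]<0$ ``because $l_{i+1}$ is a cutting point,'' but Definition~\ref{defn: cutting Point} only gives $M[l_{i+1}]\le 0$. The zero case actually occurs between two non-degenerate partitions. Take the window $(3,-10,1,1,-2,2,-1)$:
\begin{itemize}
\item It has $M=(3,-8,2,1,0,2,-1)$, cutting points $\{2,5,7\}$, and partitions $[2,4]$ and $[5,6]$.
\item Slide by $s=1$ with arriving value $5$.
\item The shifts are $5$ on $[7,7]$, then $4$ on $[5,6]$, then $M_{\mathrm{new}}[5]=0+4=4$ on $[2,4]$.
\end{itemize}
So $\Delta_i=\Delta_{i+1}$, and the lemma as stated is false. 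What is provable is $\Delta_i\le\Delta_{i+1}$, with equality iff $M_{\mathrm{old}}[l_{i+1}]=0$.

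Your first fix is the right one. Nothing later in the paper uses strictness: Lemma~\ref{lemma:cuttingpoint} and the $LM'$ update for extended partitions only need each partition to receive a uniform positive shift.

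Your second fix would need an assumption on the data, e.g.\ that no interval sums to $0$. The zero-sum-prefix/suffix convention of Section~\ref{sec:problem def} does not supply it. That convention only governs which intervals are valid, and it is compatible with $M[l_{i+1}]=0$, as the example shows.

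Step~1 needs one small repair. Position $r_{i+1}+1$ is itself an old cutting point (or $n+1$), so the induction should run over $t\in(l_{i+1},r_{i+1}]$. Also, the increment entering $P_{i+1}$ is the full updated value $M_{\mathrm{new}}[r_{i+1}+1]$, not the change of that value. This is the same reasoning you already use in Step~2 for the crossing into $P_i$.
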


\begin{lemma} \label{lemma:cuttingpoint}
When the window advances from $W$ to $W'$, let $\mathcal{C}$ and $\mathcal{C}'$ indicate the sets of cutting points in $W$ and $W'$ respectively, and let $\mathcal{E}$ and $\mathcal{A}$ denote the sets of expired and newly arrived objects. A non-cutting point in $W$ will certainly not become a cutting point in $\mathcal{C}'$. Moreover, $\mathcal{C}'$ is a subset of $(\mathcal{C}\setminus \mathcal{E})\cup \mathcal{A}$.
\end{lemma}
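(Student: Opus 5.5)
The plan is to compare, position by position, the $M$ value of each surviving object in $W$ with its $M$ value in $W'$, using the recurrence in Equation~\eqref{equ:dp}. We index objects globally, so a position $i$ with $s<i\le n$ survives into $W'$. The key observation is that $M[i]$ depends only on objects at positions $\ge i$. Expiring objects at the front therefore never changes any $M$ value. Only the arrival of $\mathcal{A}$ at the back matters, and its effect is exactly the backward propagation described above.

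First we would show the claim $M'[i]\ge M[i]$ for every surviving position $i$, where $M'$ denotes the value in $W'$. We prove it by backward induction on $i$ from $n$ down to $s+1$.

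For the base case $i=n$: in $W$ we have $M[n]=o_n$. In $W'$ the recurrence gives $M'[n]=o_n+\max(M'[n+1],0)\ge o_n$.

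For the inductive step, note that for $i<n$ the recurrence can be rewritten uniformly as $M[i]=o_i+\max(M[i+1],0)$. Since $x\mapsto\max(x,0)$ is monotone, the hypothesis $M'[i+1]\ge M[i+1]$ gives $M'[i]\ge M[i]$.

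This monotonicity is the same fact encoded in the paper's propagation procedure, where each partition's values are shifted by $\Delta_i>0$ or left unchanged. The uniform form with $\max(\cdot,0)$ avoids the case analysis over partitions, and Lemma~\ref{lemma:delta-decrease} is not needed.

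The first assertion then follows directly. A non-cutting point $i$ of $W$ has $M[i]>0$. If $i$ survives, then $M'[i]\ge M[i]>0$, so $i$ is not a cutting point of $W'$.

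For the inclusion $\mathcal{C}'\subseteq(\mathcal{C}\setminus\mathcal{E})\cup\mathcal{A}$, take any $i\in\mathcal{C}'$; it is a position of $W'$. If $i\in\mathcal{A}$ we are done. Otherwise $i$ is a surviving position of $W$, so $i\notin\mathcal{E}$. By the contrapositive of the first assertion, $i$ must have been a cutting point of $W$, so $i\in\mathcal{C}\setminus\mathcal{E}$.

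The only delicate point is the boundary at position $n$. The base case of Equation~\eqref{equ:dp}, $M[n]=o_n$, must be read as $M[n+1]$ being effectively $\le 0$ in $W$. We handle this by defining $M[n+1]:=0$ for $W$, which makes the uniform recurrence valid at $i=n$ as well. The rest is routine.
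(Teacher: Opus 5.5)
Your proof is correct and rests on the same idea as the paper: when the window slides, the $M$ value of every surviving position stays the same or increases, so a non-cutting point (with $M>0$) cannot become a cutting point, and $\mathcal{C}'\subseteq(\mathcal{C}\setminus\mathcal{E})\cup\mathcal{A}$ follows by contraposition. The only difference is how you justify the monotonicity: the paper reads it off the backward-propagation procedure (each partition's $M$ values are either shifted by some $\Delta_i>0$ or left unchanged), whereas you derive it by backward induction on $M[i]=o_i+\max(M[i+1],0)$. Your version is self-contained, since it does not rely on the unproved correctness of the propagation step, and it makes explicit that expiry leaves all surviving $M$ values untouched.
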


After the $M$ values are updated, cutting points can be identified to form partitions for the new window. In our example shown in Figure~\ref{fig:back propagation}, given cutting points $o_{12}$, $o_{14}$, and $o_{22}$, four partitions are formed: $P'_1=[7,11]$, $P'_2=[12,13]$, $P'_3=[14,21]$, and $P'_4=[22,25]$.   
We observe that partition evolution can be described using four fundamental patterns, illustrated in Figure~\ref{fig:back propagation}: 
(\textit{i}) Shrinking, where a partition loses objects from its head without incorporate new ones (e.g., $P_1'$); (\textit{ii}) Unaltered, where a partition remains unaltered (e.g., $P_2'$); (\textit{iii}) Extended, where a partition extends by shifting its right boundary forward (e.g., $P_3'$); or (\textit{iv}) Emergent, where a new partition forms entirely from newly arrived objects (e.g., $P_4'$). 
If a shrinking partition exists, it must be the first partition in the new window. For presentation clarity, we assume that all partition changes conform to one of these four patterns. This categorization is well supported by the property that a non-cutting point cannot become a cutting point after a window slides (Lemma~\ref{lemma:cuttingpoint}). However, in an extreme case, the first partition in the new window may both lose initial objects (due to object expiry) and extend its range, although this occurs only if backward propagation reaches the first cutting point, which is not common in practice.
Importantly, our approach fully supports this corner case. It can be interpreted as a combination of a Shrinking followed by an Extension, thus does not affect the correctness of the algorithm or the returned top-$k$ intervals.

\begin{algorithm}[t]
\footnotesize
  \caption{Partition Update}
  \label{algo: incremental maintenance partitions}
  \KwIn{Cutting points $\mathcal{C}$, expired objects $\mathcal{E}$, newly arrived objects $\mathcal{A}$, $M$ array, $\mathcal{LM}$ list}
  \KwOut{Updated cutting points $\mathcal{C}'$, and updated $\mathcal{LM}'$ list}

    $\mathcal{C}' \gets \mathcal{C} \setminus \mathcal{E}$\;
    $M_{new}, \mathcal{C}_{new}  \gets$ \texttt{DP-BuildNewCPs}($\mathcal{A}$)
    \tcp*{call Algorithm~\ref{algo: MSP algorithm}}
    $\mathcal{C}' \gets \mathcal{C}' \cup \mathcal{C}_{new}\setminus \texttt{BackwardPropagation}(\mathcal{C}', M, M_{new}$)\;
    $\mathcal{LM}'\gets$ \texttt{UpdateLM}($\mathcal{C}'$, $\mathcal{C}$, $\mathcal{LM}$, $M$, $M_{new}$)\;
  \Return{$\mathcal{C}'$, $\mathcal{LM}'$}\;
\end{algorithm}
 
Algorithm~\ref{algo: incremental maintenance partitions} updates partitions as follows. It removes expired cutting points (Line 1), processes new objects via Algorithm~\ref{algo: MSP algorithm} to find new cutting points (Line 2), and runs \texttt{BackwardPropagation} function propagates their cumulative impact backward to remove cutting points whose updated $M$ values become positive (Line 3). 
Finally, function \texttt{UpdateLM} updates $LM'$, the index of top-1 $M$ value in each partition $P'$ in the new window (Line 4). Let $P'=[l',r']$ be an updated partition in the new window. We detail its $LM'$ update operation below, dependent on the evolution pattern of $P'$. i) Shrinking (from partition $P_i$): If $l'\le LM_i$, $LM'=LM_i$; otherwise, we scan $M$ values of objects in $P'$ to locate $LM'$. 
ii) Unaltered (same as partition $P_i$ in previous window): $LM'=LM_i$.
iii) Extended: assume $P'$ covers $P_i$, $\cdots$, $P_j$ ($j\ge i$) in the original window, and possibly including new objects $\mathcal{A}'\subseteq \mathcal{A}$. Due to the nature of backward propagation, the position $LM_i$ within any partition $P_i$ in $W$ continues to have the largest $M$ value among all objects in that partition, even after their $M$ values are updated. Consequently, $LM'$ must lie among the positions \{$LM_i$, $\cdots$, $LM_j$\}, and, when $\mathcal{A}'\ne \emptyset$, among the newly arrived objects \{$n+1$, $\cdots$, $n+|\mathcal{A}'|$ \}, where $n$ indicates the right boundary of the previous window $W$. We identify $LM'$ by scanning the $M$ values at these positions. 
iv) Emergent: we scan $M$ values of objects in $P'$ to locate $LM'$.

\noindent
\textbf{Implementation Optimization.} In the above description, we assume that the backward propagation stops only after encountering a cutting point say $c_i$ with non-positive $M$ value. This is mainly for ease of explanation. 
In practice, however, it is not necessary to update the $M$ values for all objects located after $c_i$. In our implementation, we adopt a lazy update strategy. We update the $M$ values only for new objects and for  existing cutting points associated with previous window. The $M$ values of objects within a partition are updated only when necessary (e.g., when determining $LM'$ for a shrinking partition). A binary flag associated with each partition records whether the $M$ values of its objects are updated. When a window slides, the flags of all the partitions are initialized to 0.

\noindent
\textbf{Complexity analysis.} Removing expired cutting points requires $O(|\mathcal{E}|)$ time. 
Constructing new cutting points from the newly arrived objects costs $O(|\mathcal{A}|)$ time. 
Backward impact propagation only updates cutting points' $M$ values rather than all, which requires $O(|\mathcal{C}|)$. 
The total time complexity is $O(|\mathcal{E}| + |\mathcal{A}| + |\mathcal{C}|)$. We need auxiliary space to record the cutting points and the $M$ values of new objects, so the total space complexity is  $O(|\mathcal{C}| + |\mathcal{C}_{new}| + |\mathcal{A}|)$.

\begin{algorithm}[t]
\footnotesize
  \caption{Incremental Maximal Interval Maintenance}
  \label{algo: incremental maintenance Maximal Intervals}
  \KwIn{Partition $P'=[l',r']$ in new window with $LM'$ indicating the top-1 $M$ value, original partitions $\mathcal{P}=\{P_1$, $\cdots$, $P_p\}$, original $\mathcal{LM}$ list, reusable maximal intervals $\mathcal{CM}=$\{$CM_1$, $\cdots$, ${CM}_p\}$}
  \KwOut{Maximal intervals ${CM}'$ in partition $P'$}
  \If{$\exists P_j\in \mathcal{P}$, \texttt{Unaltered}($P'$, $P_j$) $\land$ \texttt{MaxIReusable}($P_j$)}
  {
    ${CM}'\gets{CM}_j$\;
  }
  \ElseIf{$\exists$ $P_j\in \mathcal{P}$, \texttt{Shrinking}($P'$, $P_j$) $\land$ \texttt{MaxIReusable}($P_j$)}
  {
    ${CM}' \gets {CM}_j\setminus \{\text{expired } MaxI[a,b]$ with $b<l'$\} \;
    \If{$\exists$ $MaxI[a,b]\in {CM}'$ such that $l'\in(a,b]$}
        {
        ${CM}'.\texttt{remove}(MaxI[a,b])$\;
        ${CM}'.$\texttt{add}(reidentify $MaxI$ within $[l', b]$ via Algorithm~\ref{algo:non-maximum})\;
        }
  }
  \ElseIf{$\exists P_i, \cdots, P_j\in \mathcal{P}$, \texttt{Extended}($P'$, $P_i$, $\cdots$, $P_j$) }
    {
            \If(\tcp*[h]{Case 1}){$\exists  a\in[i,j]$, $LM_a=LM'$ $\land$ 
            \texttt{MaxIReusable}($P_i$, $\cdots$, $P_a$)}{
                ${CM}' \gets \cup_{t=i}^a {CM}_t \cup MaxI[LM',r']\setminus \{MaxI \text{ in } [LM',r']$\}\;
            }
            \ElseIf(\tcp*[h]{Case 2}){$LM'> n$ $\land$ \texttt{MaxIReusable}($P_i$, $\cdots$, $P_j$)}{
                ${CM}'\gets MaxI[LM',r'] \cup$ reidentify $MaxI$ within $[l',LM'-1]$ via Algorithm~\ref{algo:non-maximum}\;
                
            }
        } 
    \Else(\tcp*[h]{emergent partition or partitions without reusable MaxI}){
        ${CM}'\gets$ Identify Maximal Interval via Algorithm~\ref{algo:non-maximum}\;
    }
  \KwRet{${CM}'$}\;
\end{algorithm}

\subsection{Maintain Maximal Intervals Incrementally}\label{sec:Maintaining Maximal Intervals}

After updating the partitions and $\mathcal{LM}$, we answer \query\ using the partition-based $k$-maximal interval search (Algorithm~\ref{algo: topk search}). Recall that $\mathcal{LM}$ only provides access to the top-1 maximal interval in each partition. During AVL-tree traversal, however, we must still retrieve other maximal intervals (beyond the top-1) from each promising partition whose top-1 enters the current top-$k$. A naive strategy computes all maximal intervals in these partitions using Algorithm~\ref{algo:non-maximum}, but this is inefficient because it ignores maximal intervals already obtained in the previous window (maintained in set $\mathcal{CM}$). Depending on how partitions evolve, many intervals in $\mathcal{CM}$ remain valid and can be reused. To avoid redundant computation, we incrementally maintain maximal intervals by exploiting partition evolution patterns and reusing prior results. Algorithm~\ref{algo: incremental maintenance Maximal Intervals} summarizes this procedure: it first determines the current partition's evolution pattern and the availability of reusable maximal intervals, then applies the corresponding maintenance strategy.

Specifically, for new partitions or partitions without reusable maximal intervals (Lines 13-14), maximal intervals must be computed from scratch using Algorithm~\ref{algo:non-maximum}. 
When reusable maximal intervals exist (determined by function \texttt{MaxIReusable}), unaltered partitions can directly reuse all previously computed maximal intervals (Lines 1-2). For shrinking and extended partitions, the corresponding maintenance strategies are detailed below. 

\noindent
\textbf{Shrinking Partitions}.  
Let partition $P'=[l',r]$ be a shrinking partition derived from partition $P=[l,r]$, where $l'>l$. The maximal intervals containing only expired objects are removed. If position $l'-1$ falls within an existing maximal interval $MaxI[a,b]$, we need to perform the following updates. 
First, $MaxI[a,b]$ shrinks to $I[l',b]$. It is guaranteed that pTMS $M[l',b+1]$ corresponding to interval $I[l',b]$ will not merge with any other pTMS in this partition.  
This is because in the maximal interval $MaxI[a,b]$, $M[a]$ has the strict maximal $M$ value, and therefore $M[a]>M[l']$. The fact that $MaxI[a,b]$ is a maximal interval in $P=[l,r]$ and $M[a]>M[l']$ guarantees that $M[l',b+1]$ cannot merge with any other pTMS in this partition.
Next, we invoke Algorithm~\ref{algo:non-maximum} to identify maximal intervals, if any, in the range of $[l',b]$.

In summary, in a shrinking partition, maximal intervals containing only expired objects are removed, the maximal interval containing both expired and valid objects, if it exists, requires further evaluation, 
and all other maximal intervals remain unchanged. 
For example, in Figure~\ref{fig:back propagation}, partition $P_2$ in $W_1$ shrinks to $P'_1$ in $W_2$. Objects $o_5$ and $o_6$ in $P_2$ expire, affecting only maximal interval \(MaxI[6,9]\), which shrinks to interval $I[7, 9]$. We then locate two maximal intervals \(I[7,7]\) and \(I[9,9]\) in this reduced range via Algorithm~\ref{algo:non-maximum}.
The other maximal interval, \(MaxI[11,11]\), remains unchanged.

\noindent
\textbf{Extended Partitions}.
Let partition $P'=[l_i,r']$ be an extended partition originating from $P_i=[l_i,r_i]$ ($r'>r_i$), covering the sequence of partitions $P_i$, $\cdots$, $P_{j}$ for some $j \ge i$, and possibly including new objects $\mathcal{A}'\subseteq \mathcal{A}$.
Let $LM'$ denote the new position in $P'$ with the largest $M$ value. 
The maintenance operations depend on whether $LM'>n$, where $n$ denotes the ending position of the previous window $W$, as described below. 

\noindent
Case 1): $LM'\le n$. This means that the position with the top-1 $M$ value lies in an existing partition, say $P_a$ (i.e., $LM'=LM_a$). Accordingly, $MaxI[LM',r']$ is the top-1 maximal interval in the extended partition $P'$, and all maximal intervals in this range become invalid and are removed. However, maximal intervals that end before position $LM'$ remain unchanged. In other words, if $CM$s corresponding to $P_i$, $\cdots$, $P_a$ are available, we can immediately identify $CM'$ by re-using all maximal intervals ending before $LM_a$. Otherwise, we only need to compute $CM$s of those partitions among $P_i$, $\cdots$, $P_a$ whose $CM$s are unavailable. We did not explicitly highlight this in our pseudo-code due to space constraints and for code simplicity. 
Consider the extended partition $P'_3=[14,21]$ in Figure~\ref{fig:back propagation} as an example. $P'_3$ (with $LM'=17$) originates from $P_4$, covers partitions $P_4$ and $P_5$, and includes new objects $\mathcal{A}'=\{o_{20}, o_{21}\}$. 
Accordingly, $MaxI[17,21]$ becomes the new top-1 maximal interval, and the previous maximal interval $MaxI[17,18]$ is removed. All maximal intervals ending before position 17 (i.e., $MaxI[15,15]$) remain valid. 

\noindent 
Case 2): $LM'>n$. This means the position with the top-1 $M$ value is located among new objects. $I[LM',r']$ becomes the top-1 maximal interval in the extended partition $P'$, we need to relocate the maximal intervals in the range of $[l_i, LM'-1]$ via Algorithm~\ref{algo:non-maximum}.

\noindent
\textbf{Complexity analysis.} Let $|\mathcal{P}|$ partitions with a uniform distribution exist in the window with $n$ objects. 
In Algorithm~\ref{algo: incremental maintenance Maximal Intervals}, only when the worst case (case 2 in extended partitions) occurs, maximal intervals in the partition need to be re-identified. 
In this case, the algorithm degenerates to the scenario where  Algorithm~\ref{algo:non-maximum} processes one partition, which requires $\mathcal{O}(\tfrac{n}{|\mathcal{P}|})$ time and $\mathcal{O}(\tfrac{n}{|\mathcal{P}|})$ space.

\begin{figure*}[tp]
    \centering
            \includegraphics[width=\linewidth]{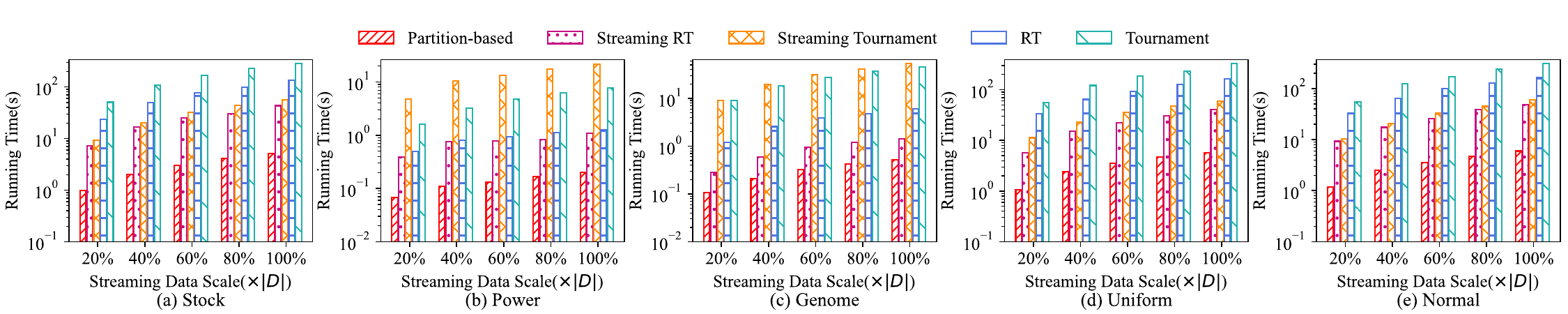}   
            \caption{Running time comparison under different data scales.}
    \label{fig:overall}  
\end{figure*}

\section{Experiments}

\subsection{Experiment Setting} \label{sec:experi setting}

\noindent
\textbf{Datasets.} 
Three real datasets, \textsc{Stock}, \textsc{Power}, and \textsc{Genome}, and two synthetic datasets, \textsc{Uniform} and \textsc{Normal}, are used.  1) \textbf{\textsc{Stock}}. Bitcoin records from Alpha Vantage~\cite{online:Alpha_Vantage} (Nov. 4, 2013--Mar. 31, 2025), converted into a $5$-minute price-delta stream (close price $-$ open price) with $~9$M records; top-$k$ maximal intervals indicate the largest cumulative price increases. 2) \textbf{\textsc{Power}}. UK's half-hourly power data (Nov. 2008--Dec. 2022), compiled by Kaggle from the Elexon Portal and National Grid~\cite{online:Electrical}. Generation minus consumption forms a surplus stream with $~0.24$M records; top-$k$ maximal intervals indicate the largest cumulative surplus. 3) \textbf{\textsc{Genome}}. The Haemophilus inﬂuenzae genome from NCBI~\cite{online:genome}; following prior work~\cite{DBLP:conf/ismb/RuzzoT99}, CpG dinucleotides score $20$ and all others score $-1$, yielding $1.8$M records. 4) \textbf{\textsc{Uniform}} and 5) \textbf{\textsc{Normal}}. Synthetic streams of $10$M real numbers each, drawn from uniform and normal distributions.

\begin{table}[t]
\footnotesize
  \caption{Parameter Setting}
  \label{tab:param}
  \centering
  \begin{tabular}{|c|c|c|}
    \hline
    {\centering Parameters} & {\centering Value Ranges \par} & {\centering Default \par}\\ \hline
    $n$   & $0.1\%$, $0.5\%$, {$1\%$}, $5\%$, $10\%$ $(\times |D|)$ & $1\%\times |D|$\\ \hline
    $s$   & $0.01\%$, {$0.1\%$}, $0.5\%$, $1\%$, $10\%$ $(\times n)$ & $0.1\%\times n$ \\ \hline
    $k$   & $5$, $10$, {$50$}, $100$, $1000$ & $50$ \\ \hline
  \end{tabular}
\end{table}

\noindent
\textbf{Parameter Setting.}
We vary three key parameters: window size $n$, slide size $s$, and result size $k$. For each dataset, we vary one parameter at a time using the values in Table~\ref{tab:param}, while fixing the others at their defaults.
Since suitable values depend on application needs, no single configuration is universally optimal. Following streaming and sliding-window evaluations~\cite{DBLP:journals/tkde/ZhuWYZW17,DBLP:conf/edbt/YangSRW11,DBLP:conf/sigmod/MouratidisBP06,DBLP:conf/cikm/ShastriDRW11,DBLP:journals/pacmmod/ZhuWYZ23,DBLP:journals/pvldb/TranFS16,DBLP:journals/pvldb/YoonLL19,DBLP:journals/tkde/TaoP06,DBLP:conf/icde/BohmOPY07,DBLP:journals/pvldb/TranMS20}, we choose parameter ranges that (i) cover multiple orders of magnitude, (ii) are normalized across datasets for comparability, and (iii) stress scalability under different workloads. Using percentages prevents parameter choices from being tied to a particular dataset scale and enables consistent comparison across datasets with different cardinalities. The default $(n,s,k)$ configuration represents a representative mid-range streaming workload with substantial window overlap and a moderate number of requested disjoint intervals.

\noindent
\textbf{Performance Metrics.}
Following prior evaluations~\cite{DBLP:journals/tkde/ZhuWYZW17,DBLP:conf/edbt/YangSRW11,DBLP:conf/sigmod/MouratidisBP06,DBLP:conf/cikm/ShastriDRW11,DBLP:journals/pacmmod/ZhuWYZ23,DBLP:journals/pvldb/TranFS16,DBLP:journals/pvldb/YoonLL19,DBLP:journals/tkde/TaoP06,DBLP:conf/icde/BohmOPY07,DBLP:journals/pvldb/TranMS20}, we use five metrics: $1)$ \emph{Overall running time}: total query and update time. $2)$ \emph{Data throughput}: updates processed per second. $3)$ \emph{Candidate set size}: average number of candidate maximal intervals. $4)$ \emph{Memory usage}: peak memory across algorithms and datasets. $5)$ \emph{Impact of number of partitions}: average running time per window under different partition counts.

\noindent
\textbf{Competitors.} 
We compare our partition-based algorithm with RT~\cite{DBLP:conf/ismb/RuzzoT99}, Tournament~\cite{DBLP:journals/ijfcs/BaeT07}, and their incremental streaming variants, Streaming RT and Streaming Tournament; implementation details appear in Appendixes~A and~B of our technical report~\cite{long-version}. All algorithms are implemented in C++ and evaluated on a Windows $11$ PC with an Intel Core i$5$-$13500$H CPU and $32$GB memory.

\subsection{Algorithm Comparisons}
\noindent
\textbf{Running time comparison}.
We assess the running time of all algorithms under varying data scales, ranging from $20\%$ to $100\%$ of each dataset. For example, using 20\% of \textsc{Stock} means that 20\% of the original records are selected to form a new dataset containing approximately $9$M $\times 0.2=1.8$M records.  
The running time required to process \query\ while simultaneously updating partitions for the entire dataset is reported in Figure~\ref{fig:overall}.
The results demonstrate that the partition-based approach consistently outperforms all competing algorithms across all datasets, achieving substantial performance gains. For example, on the \textsc{Stock} dataset, it achieves speedups of $8\times$ and $10\times$ over Streaming RT and Streaming Tournament, respectively. Additionally, it outperforms the original RT and Tournament by more than an order of magnitude. Similar advantages are observed across the remaining datasets, where the partition-based algorithm exhibits consistent and often significant superiority. 

\begin{figure}[t]
    \centering
            \includegraphics[width=\linewidth]{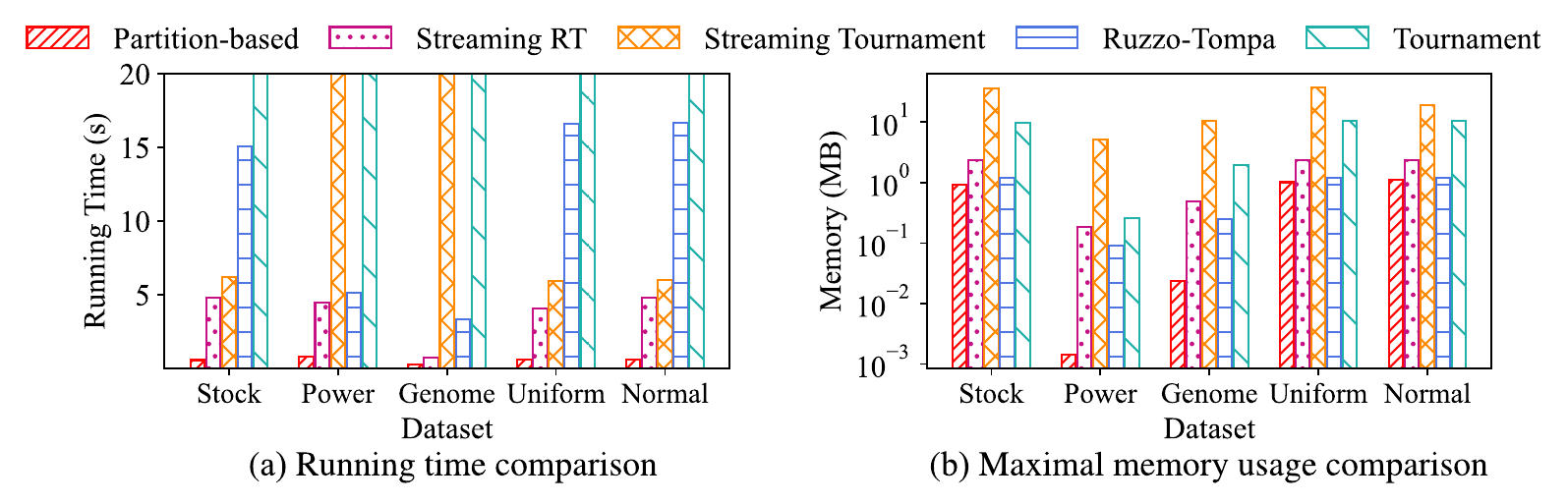}
            \caption{Performance comparison for datasets of 1M records.}
    \label{fig:1M_Time-memory}
\end{figure}

\begin{figure*}[t]
    \centering
            \includegraphics[width=\linewidth]{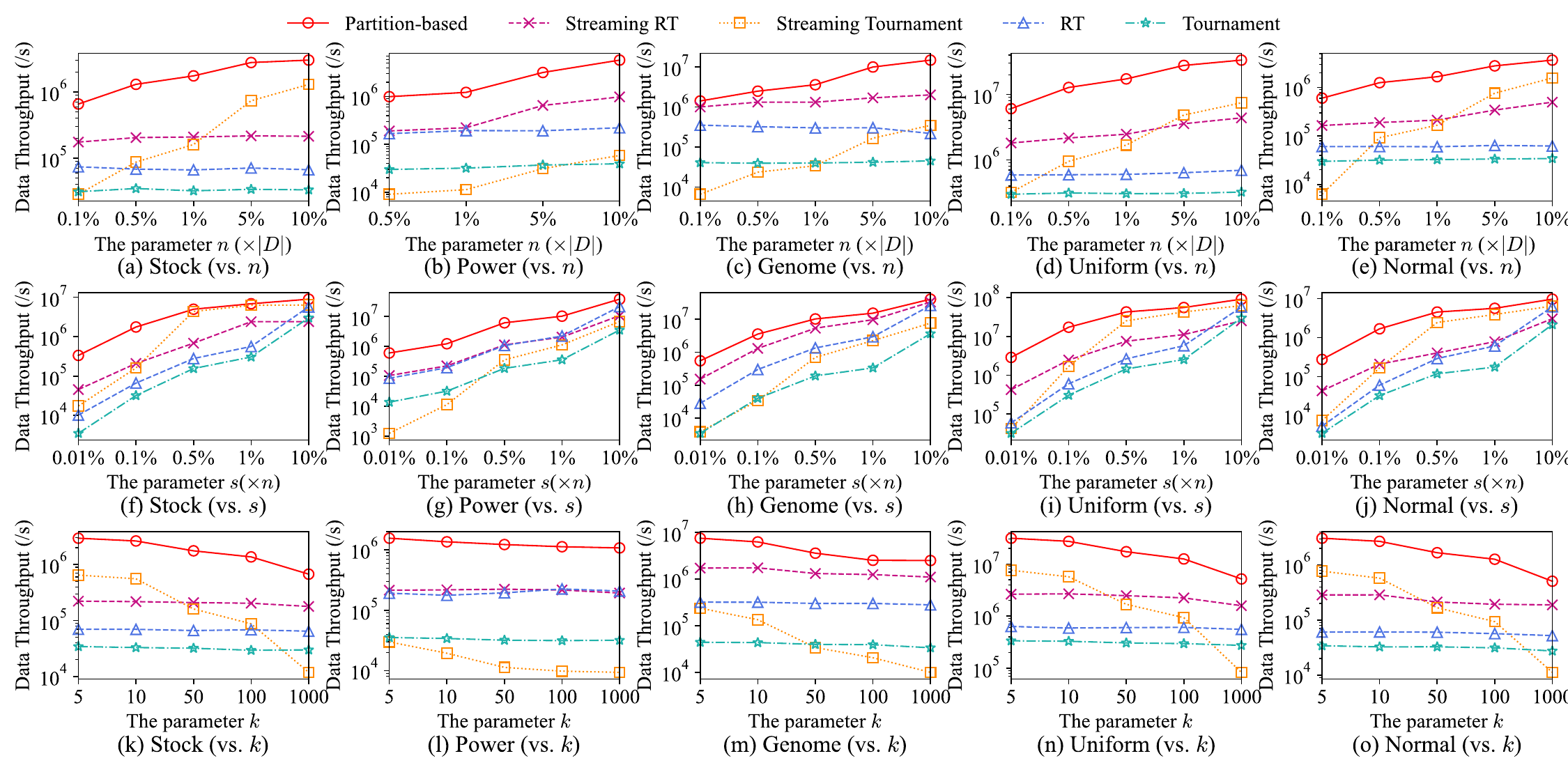} 
            \caption{Data throughput comparison under different datasets.}
    \label{fig:parameter}
\end{figure*}

The superior performance of our method can be primarily attributed to the proposed partitioning strategy. Compared with competitors, our method has three advantages. First, unlike Streaming RT, it identifies partitions likely to contain query results and confines maximal interval identification and comparison to them, substantially reducing the solution space. Second, unlike Streaming Tournament, our method avoids index structures and their restructuring overhead. Third, its partition-based incremental maintenance accurately identifies affected ranges and maximizes reuse of previously computed maximal intervals. In contrast, although both Streaming RT and Streaming Tournament support incremental maintenance, Streaming RT incurs a larger affected range due to the absence of partitions, while Streaming Tournament frequently requires expensive structural adjustments. Furthermore, without streaming incremental maintenance, original RT and Tournament must re-identify all maximal intervals in each window, yielding lower efficiency.

We also observe that the performance gap varies considerably across datasets. 
For example, on the \textsc{Stock} dataset, the partition-based approach outperforms Streaming Tournament by approximately an order of magnitude on average, whereas its advantage increases to nearly two orders of magnitude on the \textsc{Power} and \textsc{Genome} datasets. 

To further examine the sensitivity of the algorithms to data distribution, we analyze the running time required by each algorithm to process one million objects across multiple datasets. The results are presented in Figure~\ref{fig:1M_Time-memory}(a).
We observe that, for each algorithm, the time needed to process one million updates varies across datasets. This variation arises because maximal intervals, the targets of interest, are inherently sensitive to the underlying data distribution. As subsequences of signed values, both their length and their counts are influenced by distributional characteristics. 

However, it is evident that our approach exhibits the least sensitivity to data distribution. This robustness stems from the partitioning strategy: although the length and number of maximal intervals may fluctuate significantly across datasets, our method processes only a small subset of the data, specifically up to $k$ partitions most likely to contain query results. This effectively mitigates dataset-specific variations and reduces the impact of distributional differences. Consequently, we conclude that our approach demonstrates notably greater stability compared to all competing algorithms.

\noindent
\textbf{Data throughput comparison}. 
Figure~\ref{fig:parameter} presents data throughput of all algorithms under varying parameter settings across different datasets, showing that our approach consistently outperforms all competitors by a significant margin. As shown in Figures~\ref{fig:parameter}(a)-(e), as $n$ increases, the throughput of both our approach and Streaming Tournament improves, while the throughput of the other algorithms remains relatively stable. Note, due to the small size of the Power dataset, its $n$ value starts at 0.5\% rather than 0.1\%. A larger $n$ indicates that the query window contains more objects and thus more maximal intervals. However, our approach continues to process only the partitions likely to contain query results, while Streaming Tournament directly searches for the top-$k$ results based on the index. Consequently, the number of updates processed per unit time increases for both methods.

Conversely, Figures~\ref{fig:parameter}(k)-(o) show that as $k$ increases, the data throughput of the partition-based algorithm experiences a slight decrease, while that of Streaming Tournament decreases sharply. Meanwhile, the throughput of other algorithms remains relatively stable. This behavior arises because the partition-based algorithm must process more partitions as $k$ grows, whereas Streaming Tournament performs more queries. Both RT and Streaming RT retain all maximal intervals, and Tournament incurs substantial overhead due to frequent index restructuring. As a result, their performance is less sensitive to $k$. Importantly, even when $k$ is large, the partition-based algorithm still maintains a clear performance advantage.

Figures~\ref{fig:parameter}(f)-(j) illustrate that all algorithms exhibit increased throughput as $s$ becomes larger. When $s$ is small, there is considerable overlap between windows, causing many objects to be recomputed repeatedly. As $s$ increases, the overhead from redundant computations is reduced, leading to an overall improvement.

\noindent
\textbf{Candidate set size}. 
Table~\ref{tab:candid-set} reports the average number of maximal intervals maintained in the candidate set (i.e., $\mathcal{CM}$ in our algorithm). The two Tournament-Tree-based-approaches do not rely on a candidate set and are therefore excluded from this statistic. Both Streaming RT and RT maintain all maximal intervals, so their values in the table reflect the average number of maximal intervals per window. 
In contrast, our approach significantly reduces the number of intervals in the candidate set. This improvement arises because we incrementally maintain maximal intervals only within the partitions likely to contain query results. The \textsc{Power} dataset is the only exception: our method reduces candidate set only from 17 to 16, because this dataset inherently produces very few maximal intervals, leaving limited room for enhancement.

\renewcommand{\arraystretch}{0.7}
\begin{table}[t]
\footnotesize
  \caption{The average size of the candidate set (UNIT:$count$) 
  }
  \setlength{\tabcolsep}{2mm}
  \label{tab:candid-set}
\begin{tabular}{c c c c c c}
    \toprule
     Solutions      & \textsc{Stock}    & \textsc{Power}    &  \textsc{Genome} &  \textsc{Uniform} & \textsc{Normal}\\
    \midrule
    Partition-based & 930               & 16                &   194         &  1270     & 1252\\
    
    Streaming RT    & 1927              & 17                &   207         &  2961     & 3032\\

    RT              & 1927              & 17                &   207         &  2961     & 3032 \\                  \bottomrule
\end{tabular}
\end{table}

Table~\ref{tab:candid} presents the average per-window statistics for candidate count and usage.  Count denotes the number of maximal intervals identified, while usage counts how many times these intervals are used, including comparisons for top-$k$ selection and candidate-set maintenance. The results show that our approach achieves substantial improvements in both metrics.
This advantage stems from our partitioning strategy: during maximal interval identification, top-$k$ comparison, and incremental maintenance, the strategy avoids exhaustively processing all maximal intervals and instead focuses only on partitions likely to contain query results.

\noindent
\textbf{Memory usage comparison.}
Figure~\ref{fig:1M_Time-memory}(b) illustrates the maximum memory usage of all the algorithms during their execution across various datasets. Among these, the partition-based approach consumes the least memory, as the partitioning strategy minimizes the number of partitions processed. Both Streaming RT and RT require comparing all maximal intervals to select the top-$k$, necessitating the use of more memory to identify and maintain all maximal intervals. While Streaming Tournament and Tournament do not maintain a candidate set, they still incur higher memory costs compared to the other algorithms due to the construction of the Tournament tree. Furthermore, the memory usage of the partition-based approach is notably lower than that of the other algorithms, particularly on the \textsc{Power} dataset. A detailed examination of this dataset reveals that most maximal intervals exist in the form of the maximal interval with the largest sum within its respective partition. These intervals can be directly obtained during partition construction, eliminating the need to invoke the maximal interval identification algorithm (Algorithm~\ref{algo:non-maximum}), which results in a substantial reduction in memory usage.

\renewcommand{\arraystretch}{0.7}
\begin{table}[t]
\footnotesize
  \caption{Average count/usage statistics of candidates per window (UNIT:$count$ for count, and $times$ for usage)}
  \setlength{\tabcolsep}{3mm}
  \label{tab:candid}
  \begin{tabular}{lccccc}
    \toprule
     Dataset&Partition-based & Streaming RT  & RT  \\
    \midrule
    {\textsc{Stock}}  & 49.29 /  370.82        & 1244.15 / 1927.74    & 39401.89 / 1927.74   \\

   {\textsc{Power}} & 0.51 / 0.97           & 92.15 /17.13       & 431.62 / 17.13             \\
    {\textsc{Genome}} & 1.67 / 6.30           & 6.90 / 207.73         & 724.41 / 207.73       \\
    {\textsc{Uniform}} & 50.74 / 569.54           & 1228.59 / 2961.59       & 49977.34 / 2961.59      \\
    {\textsc{Normal}}  & 51.83 / 620.57           & 1137.16 / 3032.76        & 49972.99 / 3032.76        \\                              
  \bottomrule
\end{tabular}
\end{table}

\begin{table}[t]
\centering
\footnotesize
\caption{Average running time per sliding window under different numbers of partitions on \textsc{Stock} dataset.}
\label{tab:ave-run-time-per-window}
\setlength{\tabcolsep}{3.5pt}
\begin{tabular}{l c c c c c c}
\toprule
Groups by $|\mathcal{P}|$
& $1\mbox{-}100$
& $101\mbox{-}200$
& $201\mbox{-}300$
& $301\mbox{-}400$
& $401\mbox{-}500$
& $501\mbox{-}600$ \\
Groups' Frequency
& $42.95\%$
& $30.74\%$
& $17.92\%$
& $6.41\%$
& $1.68\%$
& $0.30\%$ \\
Time$/|W|$ ($\times 10^{-5}$s)
& 5.45
& 4.83
& 4.34
& 3.89
& 3.35
& 2.91 \\
\midrule
Baselines
& \multicolumn{2}{c}{Streaming RT}
& \multicolumn{2}{c}{Streaming TT}
& \multicolumn{1}{c}{RT}
& \multicolumn{1}{c}{TT} \\
Time$/|W|$ ($\times 10^{-5}$s)
& \multicolumn{2}{c}{43.56}
& \multicolumn{2}{c}{56.64}
& \multicolumn{1}{c}{137.36}
& \multicolumn{1}{c}{283.38} \\
\bottomrule
\end{tabular}
\end{table}

\noindent

\textbf{Impact of number of partitions.} 
Our partitioning scheme is data-driven: for a fixed window, the cutting points and thus the number of partitions $|\mathcal{P}|$ are uniquely determined by the distribution of signed values within the window. Consequently, $|\mathcal{P}|$ is not controlled by any algorithmic parameter, but varies naturally with the data.
To evaluate how the partition count affects performance in practice, we conduct a window stratification study under the default parameters. Specifically, we process the entire stream and record for every sliding window, then group windows into ranges according to partition counts $|\mathcal{P}|$ (e.g., 1-100, 101-200, etc).
For each group, we report both its frequency and the average running time per window. Table~\ref{tab:ave-run-time-per-window} reports the results on the \textsc{Stock} stream, together with the average running time per window required by other baselines. 
Note, Streaming TT and TT represent Streaming Tournament Tree and Tournament Tree, respectively. Please refer to our technical report~\cite{long-version} for results on other datasets.

The results reveal a clear monotonic trend: windows with larger $|\mathcal{P}|$ have lower average running time. This behavior reflects the key advantage of the partition-based framework. A larger number of partitions provides finer-grained localization and more effective pruning, allowing the top-$k$ search to focus on fewer relevant regions and thereby reducing overall computation. As a result, the reduction in search space outweighs
the relatively small overhead of maintaining additional partitions. Across all partition
ranges, our method consistently outperforms these baselines by a substantial margin. Importantly, even for windows with small $|\mathcal{P}|$, the running time remains stable and highly competitive, indicating that our method does not degrade under coarse partitioning.

\section{Related Work} \label{sec:related work}

To the best of our knowledge, no existing algorithm efficiently addresses the \query\ problem over streaming data. Prior studies mainly address maximal interval discovery in static sequences or consider different interval optimization objectives.

\noindent
\textbf{Maximal interval discovery in static sequences.}
Finding maximal sum intervals is a classic problem derived from the maximum subarray problem~\cite{online:Maximums_subarray_problem}. Existing solutions mainly fall into two categories. Scan-based approaches, such as the Ruzzo-Tompa (RT) algorithm~\cite{DBLP:conf/ismb/RuzzoT99}, identify all maximal intervals in linear time. However, RT always enumerates the full set of maximal intervals, even when only the top-$k$ results are needed, and the intervals are not ranked during enumeration, requiring an additional selection step. Index-based approaches, such as the Tournament-Tree method~\cite{DBLP:journals/ijfcs/BaeT07}, build a tree-structured index that enables direct retrieval of the highest-ranked interval. After each interval is retrieved, its elements are marked as holes to prevent reuse in subsequent results. While effective for static data, these approaches assume a fixed input sequence and are not designed for continuous updates.

\noindent

\noindent
\textbf{Streaming Interval Extraction.} 
Several studies investigate extracting interval-shaped patterns from data streams under different formulations. For example, \cite{DBLP:journals/ijar/MantenoglouAP23} proposes an online framework (oPIEC) for detecting probabilistic maximal intervals in noisy streams.
This approach operates on probability and locates the threshold-based intervals that may overlap. 
In contrast, \query\ operates on signed streams and returns globally ranked disjoint maximal intervals, making the problem fundamentally different.

\noindent
\textbf{Other Variants}. Some work~\cite{DBLP:journals/tcbb/Csuros04,DBLP:conf/cocoon/BengtssonC06,DBLP:conf/cpm/GawrychowskiN15} also studies selecting $k$ disjoint subsequences that maximize the total sum. 
This objective differs fundamentally from \query. For instance, given the sequence $(10, -9, 10$, $-11, 1)$ and $k=2$, these variants would select intervals $I[1,1]$ and $I[3,3]$ (total sum $= 20$), whereas \query\ selects $MaxI[1,3]$ and $MaxI[5,5]$).

\noindent

\textbf{Limitations of existing approaches.}
Although static KMaxI algorithms could in principle be adapted to streaming settings through incremental maintenance, such adaptations remain costly. 
Scan-based methods must maintain and repeatedly compare all maximal intervals within the sliding window to derive the top-$k$ results, leading to large candidate sets and high update overhead. 
For index-based methods require continuous structural updates to maintain correctness under window slides, including node insertions, deletions, and restoration of previously excluded elements. These operations introduce significant maintenance costs in high-velocity streams. 
In contrast, our method avoids both full enumeration and global indexing.
By exploiting the partition containment guarantee, we prune partitions that cannot contribute to the top-$k$ results and restrict interval identification and maintenance to a small number of promising partitions, enabling efficient and robust incremental processing over data streams.

\section{Conclusion}
This paper introduces an innovative partitioning strategy designed to support \query\ over streaming data. The proposed strategy demonstrates significant efficiency in candidate identification, candidate set maintenance, and query result retrieval. Its adaptability enables effective support for \query\ using a compact candidate set. Our comprehensive experiments across diverse datasets validate the superior performance and relative stability of the proposed algorithms. In the near future, we would like to study parallel algorithms for \query\ over streaming data, leveraging the inherent property that partitions can be processed independently.

\begin{acks}
 The work was partially supported by the National Key Research and Development Program of China (No. 2024YFF0617702); the National Natural Science Foundation of China (Nos. U22A2025, 62232007, and U23A20309); the 111 Project (No. B16009); the Ant Group Research Program (No. 2025021900003).
\end{acks}


\bibliographystyle{ACM-Reference-Format}
\bibliography{reference}


\begin{thebibliography}{38}


\ifx \showCODEN    \undefined \def \showCODEN     #1{\unskip}     \fi
\ifx \showDOI      \undefined \def \showDOI       #1{#1}\fi
\ifx \showISBNx    \undefined \def \showISBNx     #1{\unskip}     \fi
\ifx \showISBNxiii \undefined \def \showISBNxiii  #1{\unskip}     \fi
\ifx \showISSN     \undefined \def \showISSN      #1{\unskip}     \fi
\ifx \showLCCN     \undefined \def \showLCCN      #1{\unskip}     \fi
\ifx \shownote     \undefined \def \shownote      #1{#1}          \fi
\ifx \showarticletitle \undefined \def \showarticletitle #1{#1}   \fi
\ifx \showURL      \undefined \def \showURL       {\relax}        \fi
\providecommand\bibfield[2]{#2}
\providecommand\bibinfo[2]{#2}
\providecommand\natexlab[1]{#1}
\providecommand\showeprint[2][]{arXiv:#2}

\bibitem[\protect\citeauthoryear{??}{onl}{2025a}]%
        {online:Alpha_Vantage}
 \bibinfo{year}{2025}\natexlab{a}.
\newblock \bibinfo{booktitle}{\emph{Alpha Vantage}}.
\newblock
\urldef\tempurl%
\url{https://www.alphavantage.co/}
\showURL{%
\tempurl}
\newblock
\shownote{Accessed: October 15, 2025.}


\bibitem[\protect\citeauthoryear{??}{onl}{2025b}]%
        {online:Electrical}
 \bibinfo{year}{2025}\natexlab{b}.
\newblock \bibinfo{booktitle}{\emph{Electrical Grid Half Hourly (UK)}}.
\newblock
\urldef\tempurl%
\url{https://www.kaggle.com/datasets/thedevastator/gb-electrical-grid-half-hourly-data-2008-present}
\showURL{%
\tempurl}
\newblock
\shownote{Accessed: October 15, 2025.}


\bibitem[\protect\citeauthoryear{??}{onl}{2025c}]%
        {online:genome}
 \bibinfo{year}{2025}\natexlab{c}.
\newblock \bibinfo{booktitle}{\emph{Haemophilus influenzae whole genome}}.
\newblock
\urldef\tempurl%
\url{https://www.ncbi.nlm.nih.gov/Traces/wgs/JAMLFG01}
\showURL{%
\tempurl}
\newblock
\shownote{Accessed: October 15, 2025.}


\bibitem[\protect\citeauthoryear{??}{onl}{2025d}]%
        {online:Maximums_subarray_problem}
 \bibinfo{year}{2025}\natexlab{d}.
\newblock \bibinfo{booktitle}{\emph{Maximum subarray problem}}.
\newblock
\urldef\tempurl%
\url{https://en.wikipedia.org/wiki/Maximum_subarray_problem}
\showURL{%
\tempurl}
\newblock
\shownote{Accessed: October 6, 2025.}


\bibitem[\protect\citeauthoryear{Ajtai, Jayram, Kumar, and Sivakumar}{Ajtai et~al\mbox{.}}{2002}]%
        {DBLP:conf/stoc/AjtaiJKS02}
\bibfield{author}{\bibinfo{person}{Mikl{\'{o}}s Ajtai}, \bibinfo{person}{T.~S. Jayram}, \bibinfo{person}{Ravi Kumar}, {and} \bibinfo{person}{D. Sivakumar}.} \bibinfo{year}{2002}\natexlab{}.
\newblock \showarticletitle{Approximate counting of inversions in a data stream}. In \bibinfo{booktitle}{\emph{Proceedings on 34th Annual {ACM} Symposium on Theory of Computing, May 19-21, 2002, Montr{\'{e}}al, Qu{\'{e}}bec, Canada}}, \bibfield{editor}{\bibinfo{person}{John~H. Reif}} (Ed.). \bibinfo{publisher}{{ACM}}, \bibinfo{pages}{370--379}.
\newblock
\urldef\tempurl%
\url{https://doi.org/10.1145/509907.509964}
\showDOI{\tempurl}


\bibitem[\protect\citeauthoryear{Bae and Takaoka}{Bae and Takaoka}{2007}]%
        {DBLP:journals/ijfcs/BaeT07}
\bibfield{author}{\bibinfo{person}{Sung~Eun Bae} {and} \bibinfo{person}{Tadao Takaoka}.} \bibinfo{year}{2007}\natexlab{}.
\newblock \showarticletitle{Algorithms for k-Disjoint Maximum Subarrays}.
\newblock \bibinfo{journal}{\emph{Int. J. Found. Comput. Sci.}} \bibinfo{volume}{18}, \bibinfo{number}{2} (\bibinfo{year}{2007}), \bibinfo{pages}{319--339}.
\newblock
\urldef\tempurl%
\url{https://doi.org/10.1142/S012905410700470X}
\showDOI{\tempurl}


\bibitem[\protect\citeauthoryear{Bengtsson and Chen}{Bengtsson and Chen}{2006}]%
        {DBLP:conf/cocoon/BengtssonC06}
\bibfield{author}{\bibinfo{person}{Fredrik Bengtsson} {and} \bibinfo{person}{Jingsen Chen}.} \bibinfo{year}{2006}\natexlab{}.
\newblock \showarticletitle{Computing Maximum-Scoring Segments in Almost Linear Time}. In \bibinfo{booktitle}{\emph{Computing and Combinatorics, 12th Annual International Conference, {COCOON} 2006, Taipei, Taiwan, August 15-18, 2006, Proceedings}} \emph{(\bibinfo{series}{Lecture Notes in Computer Science})}, \bibfield{editor}{\bibinfo{person}{Danny~Z. Chen} {and} \bibinfo{person}{D.~T. Lee}} (Eds.), Vol.~\bibinfo{volume}{4112}. \bibinfo{publisher}{Springer}, \bibinfo{pages}{255--264}.
\newblock
\urldef\tempurl%
\url{https://doi.org/10.1007/11809678\_28}
\showDOI{\tempurl}


\bibitem[\protect\citeauthoryear{B{\"{o}}hm, Ooi, Plant, and Yan}{B{\"{o}}hm et~al\mbox{.}}{2007}]%
        {DBLP:conf/icde/BohmOPY07}
\bibfield{author}{\bibinfo{person}{Christian B{\"{o}}hm}, \bibinfo{person}{Beng~Chin Ooi}, \bibinfo{person}{Claudia Plant}, {and} \bibinfo{person}{Ying Yan}.} \bibinfo{year}{2007}\natexlab{}.
\newblock \showarticletitle{Efficiently Processing Continuous k-NN Queries on Data Streams}. In \bibinfo{booktitle}{\emph{Proceedings of the 23rd International Conference on Data Engineering, {ICDE} 2007, The Marmara Hotel, Istanbul, Turkey, April 15-20, 2007}}, \bibfield{editor}{\bibinfo{person}{Rada Chirkova}, \bibinfo{person}{Asuman Dogac}, \bibinfo{person}{M.~Tamer {\"{O}}zsu}, {and} \bibinfo{person}{Timos~K. Sellis}} (Eds.). \bibinfo{publisher}{{IEEE} Computer Society}, \bibinfo{pages}{156--165}.
\newblock
\urldef\tempurl%
\url{https://doi.org/10.1109/ICDE.2007.367861}
\showDOI{\tempurl}


\bibitem[\protect\citeauthoryear{Chen, Lv, Yu, and Liu}{Chen et~al\mbox{.}}{2017}]%
        {DBLP:journals/dase/ChenLYL17}
\bibfield{author}{\bibinfo{person}{Ben Chen}, \bibinfo{person}{Zhijin Lv}, \bibinfo{person}{Xiaohui Yu}, {and} \bibinfo{person}{Yang Liu}.} \bibinfo{year}{2017}\natexlab{}.
\newblock \showarticletitle{Sliding Window Top-K Monitoring over Distributed Data Streams}.
\newblock \bibinfo{journal}{\emph{Data Sci. Eng.}} \bibinfo{volume}{2}, \bibinfo{number}{4} (\bibinfo{year}{2017}), \bibinfo{pages}{289--300}.
\newblock
\urldef\tempurl%
\url{https://doi.org/10.1007/S41019-017-0053-1}
\showDOI{\tempurl}


\bibitem[\protect\citeauthoryear{Cs{\"{u}}r{\"{o}}s}{Cs{\"{u}}r{\"{o}}s}{2004}]%
        {DBLP:journals/tcbb/Csuros04}
\bibfield{author}{\bibinfo{person}{Mikl{\'{o}}s Cs{\"{u}}r{\"{o}}s}.} \bibinfo{year}{2004}\natexlab{}.
\newblock \showarticletitle{Maximum-Scoring Segment Sets}.
\newblock \bibinfo{journal}{\emph{{IEEE} {ACM} Trans. Comput. Biol. Bioinform.}} \bibinfo{volume}{1}, \bibinfo{number}{4} (\bibinfo{year}{2004}), \bibinfo{pages}{139--150}.
\newblock
\urldef\tempurl%
\url{https://doi.org/10.1109/TCBB.2004.43}
\showDOI{\tempurl}


\bibitem[\protect\citeauthoryear{Escobar, Betancur, and Isaac}{Escobar et~al\mbox{.}}{2024}]%
        {Escobar2024}
\bibfield{author}{\bibinfo{person}{Eros~D. Escobar}, \bibinfo{person}{Daniel Betancur}, {and} \bibinfo{person}{Idi~A. Isaac}.} \bibinfo{year}{2024}\natexlab{}.
\newblock \showarticletitle{Optimal Power and Battery Storage Dispatch Architecture for Microgrids: Implementation in a Campus Microgrid}.
\newblock \bibinfo{journal}{\emph{Smart Grids and Sustainable Energy}} \bibinfo{volume}{9}, \bibinfo{number}{2} (\bibinfo{year}{2024}), \bibinfo{pages}{27}.
\newblock
\showISSN{2731-8087}
\urldef\tempurl%
\url{https://doi.org/10.1007/s40866-024-00210-8}
\showDOI{\tempurl}


\bibitem[\protect\citeauthoryear{Gawrychowski and Nicholson}{Gawrychowski and Nicholson}{2015}]%
        {DBLP:conf/cpm/GawrychowskiN15}
\bibfield{author}{\bibinfo{person}{Pawel Gawrychowski} {and} \bibinfo{person}{Patrick~K. Nicholson}.} \bibinfo{year}{2015}\natexlab{}.
\newblock \showarticletitle{Encodings of Range Maximum-Sum Segment Queries and Applications}. In \bibinfo{booktitle}{\emph{Combinatorial Pattern Matching - 26th Annual Symposium, {CPM} 2015, Ischia Island, Italy, June 29 - July 1, 2015, Proceedings}} \emph{(\bibinfo{series}{Lecture Notes in Computer Science})}, \bibfield{editor}{\bibinfo{person}{Ferdinando Cicalese}, \bibinfo{person}{Ely Porat}, {and} \bibinfo{person}{Ugo Vaccaro}} (Eds.), Vol.~\bibinfo{volume}{9133}. \bibinfo{publisher}{Springer}, \bibinfo{pages}{196--206}.
\newblock
\urldef\tempurl%
\url{https://doi.org/10.1007/978-3-319-19929-0\_17}
\showDOI{\tempurl}


\bibitem[\protect\citeauthoryear{Gedik, Wu, Yu, and Liu}{Gedik et~al\mbox{.}}{2007}]%
        {DBLP:conf/icde/GedikWYL07}
\bibfield{author}{\bibinfo{person}{Bugra Gedik}, \bibinfo{person}{Kun{-}Lung Wu}, \bibinfo{person}{Philip~S. Yu}, {and} \bibinfo{person}{Ling Liu}.} \bibinfo{year}{2007}\natexlab{}.
\newblock \showarticletitle{A Load Shedding Framework and Optimizations for M-way Windowed Stream Joins}. In \bibinfo{booktitle}{\emph{Proceedings of the 23rd International Conference on Data Engineering, {ICDE} 2007, The Marmara Hotel, Istanbul, Turkey, April 15-20, 2007}}, \bibfield{editor}{\bibinfo{person}{Rada Chirkova}, \bibinfo{person}{Asuman Dogac}, \bibinfo{person}{M.~Tamer {\"{O}}zsu}, {and} \bibinfo{person}{Timos~K. Sellis}} (Eds.). \bibinfo{publisher}{{IEEE} Computer Society}, \bibinfo{pages}{536--545}.
\newblock
\urldef\tempurl%
\url{https://doi.org/10.1109/ICDE.2007.367899}
\showDOI{\tempurl}


\bibitem[\protect\citeauthoryear{Gu, Yu, and Wang}{Gu et~al\mbox{.}}{2007}]%
        {DBLP:conf/icde/GuYW07}
\bibfield{author}{\bibinfo{person}{Xiaohui Gu}, \bibinfo{person}{Philip~S. Yu}, {and} \bibinfo{person}{Haixun Wang}.} \bibinfo{year}{2007}\natexlab{}.
\newblock \showarticletitle{Adaptive Load Diffusion for Multiway Windowed Stream Joins}. In \bibinfo{booktitle}{\emph{Proceedings of the 23rd International Conference on Data Engineering, {ICDE} 2007, The Marmara Hotel, Istanbul, Turkey, April 15-20, 2007}}, \bibfield{editor}{\bibinfo{person}{Rada Chirkova}, \bibinfo{person}{Asuman Dogac}, \bibinfo{person}{M.~Tamer {\"{O}}zsu}, {and} \bibinfo{person}{Timos~K. Sellis}} (Eds.). \bibinfo{publisher}{{IEEE} Computer Society}, \bibinfo{pages}{146--155}.
\newblock
\urldef\tempurl%
\url{https://doi.org/10.1109/ICDE.2007.367860}
\showDOI{\tempurl}


\bibitem[\protect\citeauthoryear{Haghani, Michel, and Aberer}{Haghani et~al\mbox{.}}{2009}]%
        {DBLP:conf/cikm/HaghaniMA09}
\bibfield{author}{\bibinfo{person}{Parisa Haghani}, \bibinfo{person}{Sebastian Michel}, {and} \bibinfo{person}{Karl Aberer}.} \bibinfo{year}{2009}\natexlab{}.
\newblock \showarticletitle{Evaluating top-k queries over incomplete data streams}. In \bibinfo{booktitle}{\emph{Proceedings of the 18th {ACM} Conference on Information and Knowledge Management, {CIKM} 2009, Hong Kong, China, November 2-6, 2009}}, \bibfield{editor}{\bibinfo{person}{David~Wai{-}Lok Cheung}, \bibinfo{person}{Il{-}Yeol Song}, \bibinfo{person}{Wesley~W. Chu}, \bibinfo{person}{Xiaohua Hu}, {and} \bibinfo{person}{Jimmy Lin}} (Eds.). \bibinfo{publisher}{{ACM}}, \bibinfo{pages}{877--886}.
\newblock
\urldef\tempurl%
\url{https://doi.org/10.1145/1645953.1646064}
\showDOI{\tempurl}


\bibitem[\protect\citeauthoryear{Jin, Yi, Chen, Yu, and Lin}{Jin et~al\mbox{.}}{2010}]%
        {DBLP:journals/vldb/JinYCYL10}
\bibfield{author}{\bibinfo{person}{Cheqing Jin}, \bibinfo{person}{Ke Yi}, \bibinfo{person}{Lei Chen}, \bibinfo{person}{Jeffrey~Xu Yu}, {and} \bibinfo{person}{Xuemin Lin}.} \bibinfo{year}{2010}\natexlab{}.
\newblock \showarticletitle{Sliding-window top-\emph{k} queries on uncertain streams}.
\newblock \bibinfo{journal}{\emph{{VLDB} J.}} \bibinfo{volume}{19}, \bibinfo{number}{3} (\bibinfo{year}{2010}), \bibinfo{pages}{411--435}.
\newblock
\urldef\tempurl%
\url{https://doi.org/10.1007/S00778-009-0171-0}
\showDOI{\tempurl}


\bibitem[\protect\citeauthoryear{Jin, Lai, Hoshyar, Innab, Shutaywi, Deebani, and Swathi}{Jin et~al\mbox{.}}{2025}]%
        {jin2025ideally}
\bibfield{author}{\bibinfo{person}{Yuan Jin}, \bibinfo{person}{Yunliang Lai}, \bibinfo{person}{Azadeh~Noori Hoshyar}, \bibinfo{person}{Nisreen Innab}, \bibinfo{person}{Meshal Shutaywi}, \bibinfo{person}{Wejdan Deebani}, {and} \bibinfo{person}{A Swathi}.} \bibinfo{year}{2025}\natexlab{}.
\newblock \showarticletitle{An ideally designed deep trust network model for heart disease prediction based on seagull optimization and Ruzzo Tompa algorithm}.
\newblock \bibinfo{journal}{\emph{Scientific Reports}} \bibinfo{volume}{15}, \bibinfo{number}{1} (\bibinfo{year}{2025}), \bibinfo{pages}{6035}.
\newblock


\bibitem[\protect\citeauthoryear{Karlin and Altschul}{Karlin and Altschul}{1993}]%
        {doi:10.1073/pnas.90.12.5873}
\bibfield{author}{\bibinfo{person}{S Karlin} {and} \bibinfo{person}{S~F Altschul}.} \bibinfo{year}{1993}\natexlab{}.
\newblock \showarticletitle{Applications and statistics for multiple high-scoring segments in molecular sequences.}
\newblock \bibinfo{journal}{\emph{Proceedings of the National Academy of Sciences}} \bibinfo{volume}{90}, \bibinfo{number}{12} (\bibinfo{year}{1993}), \bibinfo{pages}{5873--5877}.
\newblock
\urldef\tempurl%
\url{https://doi.org/10.1073/pnas.90.12.5873}
\showDOI{\tempurl}
\showeprint{https://www.pnas.org/doi/pdf/10.1073/pnas.90.12.5873}


\bibitem[\protect\citeauthoryear{Karlin and Brendel}{Karlin and Brendel}{1992}]%
        {doi:10.1126/science.1621093}
\bibfield{author}{\bibinfo{person}{Samuel Karlin} {and} \bibinfo{person}{Volker Brendel}.} \bibinfo{year}{1992}\natexlab{}.
\newblock \showarticletitle{Chance and Statistical Significance in Protein and DNA Sequence Analysis}.
\newblock \bibinfo{journal}{\emph{Science}} \bibinfo{volume}{257}, \bibinfo{number}{5066} (\bibinfo{year}{1992}), \bibinfo{pages}{39--49}.
\newblock
\urldef\tempurl%
\url{https://doi.org/10.1126/science.1621093}
\showDOI{\tempurl}
\showeprint{https://www.science.org/doi/pdf/10.1126/science.1621093}


\bibitem[\protect\citeauthoryear{Li, Gu, Zhou, Ma, and Yu}{Li et~al\mbox{.}}{2017}]%
        {DBLP:conf/edbt/Li0ZMY17}
\bibfield{author}{\bibinfo{person}{Tianyi Li}, \bibinfo{person}{Yu Gu}, \bibinfo{person}{Xiangmin Zhou}, \bibinfo{person}{Qian Ma}, {and} \bibinfo{person}{Ge Yu}.} \bibinfo{year}{2017}\natexlab{}.
\newblock \showarticletitle{An Effective and Efficient Truth Discovery Framework over Data Streams}. In \bibinfo{booktitle}{\emph{Proceedings of the 20th International Conference on Extending Database Technology, {EDBT} 2017, Venice, Italy, March 21-24, 2017}}, \bibfield{editor}{\bibinfo{person}{Volker Markl}, \bibinfo{person}{Salvatore Orlando}, \bibinfo{person}{Bernhard Mitschang}, \bibinfo{person}{Periklis Andritsos}, \bibinfo{person}{Kai{-}Uwe Sattler}, {and} \bibinfo{person}{Sebastian Bre{\ss}}} (Eds.). \bibinfo{publisher}{OpenProceedings.org}, \bibinfo{pages}{180--191}.
\newblock
\urldef\tempurl%
\url{https://doi.org/10.5441/002/EDBT.2017.17}
\showDOI{\tempurl}


\bibitem[\protect\citeauthoryear{Liang, Ren, Weerkamp, Meij, and de~Rijke}{Liang et~al\mbox{.}}{2014}]%
        {10.1145/2661829.2661905}
\bibfield{author}{\bibinfo{person}{Shangsong Liang}, \bibinfo{person}{Zhaochun Ren}, \bibinfo{person}{Wouter Weerkamp}, \bibinfo{person}{Edgar Meij}, {and} \bibinfo{person}{Maarten de Rijke}.} \bibinfo{year}{2014}\natexlab{}.
\newblock \showarticletitle{Time-Aware Rank Aggregation for Microblog Search}. In \bibinfo{booktitle}{\emph{Proceedings of the 23rd ACM International Conference on Conference on Information and Knowledge Management}} (Shanghai, China) \emph{(\bibinfo{series}{CIKM '14})}. \bibinfo{publisher}{Association for Computing Machinery}, \bibinfo{address}{New York, NY, USA}, \bibinfo{pages}{989–998}.
\newblock
\showISBNx{9781450325981}
\urldef\tempurl%
\url{https://doi.org/10.1145/2661829.2661905}
\showDOI{\tempurl}


\bibitem[\protect\citeauthoryear{Mantenoglou, Artikis, and Paliouras}{Mantenoglou et~al\mbox{.}}{2023}]%
        {DBLP:journals/ijar/MantenoglouAP23}
\bibfield{author}{\bibinfo{person}{Periklis Mantenoglou}, \bibinfo{person}{Alexander Artikis}, {and} \bibinfo{person}{Georgios Paliouras}.} \bibinfo{year}{2023}\natexlab{}.
\newblock \showarticletitle{Online event recognition over noisy data streams}.
\newblock \bibinfo{journal}{\emph{Int. J. Approx. Reason.}}  \bibinfo{volume}{161} (\bibinfo{year}{2023}), \bibinfo{pages}{108993}.
\newblock
\urldef\tempurl%
\url{https://doi.org/10.1016/J.IJAR.2023.108993}
\showDOI{\tempurl}


\bibitem[\protect\citeauthoryear{Mouratidis, Bakiras, and Papadias}{Mouratidis et~al\mbox{.}}{2006}]%
        {DBLP:conf/sigmod/MouratidisBP06}
\bibfield{author}{\bibinfo{person}{Kyriakos Mouratidis}, \bibinfo{person}{Spiridon Bakiras}, {and} \bibinfo{person}{Dimitris Papadias}.} \bibinfo{year}{2006}\natexlab{}.
\newblock \showarticletitle{Continuous monitoring of top-k queries over sliding windows}. In \bibinfo{booktitle}{\emph{Proceedings of the {ACM} {SIGMOD} International Conference on Management of Data, Chicago, Illinois, USA, June 27-29, 2006}}, \bibfield{editor}{\bibinfo{person}{Surajit Chaudhuri}, \bibinfo{person}{Vagelis Hristidis}, {and} \bibinfo{person}{Neoklis Polyzotis}} (Eds.). \bibinfo{publisher}{{ACM}}, \bibinfo{pages}{635--646}.
\newblock
\urldef\tempurl%
\url{https://doi.org/10.1145/1142473.1142544}
\showDOI{\tempurl}


\bibitem[\protect\citeauthoryear{Pasternack and Roth}{Pasternack and Roth}{2009}]%
        {10.1145/1526709.1526840}
\bibfield{author}{\bibinfo{person}{Jeff Pasternack} {and} \bibinfo{person}{Dan Roth}.} \bibinfo{year}{2009}\natexlab{}.
\newblock \showarticletitle{Extracting article text from the web with maximum subsequence segmentation}. In \bibinfo{booktitle}{\emph{Proceedings of the 18th International Conference on World Wide Web}} (Madrid, Spain) \emph{(\bibinfo{series}{WWW '09})}. \bibinfo{publisher}{Association for Computing Machinery}, \bibinfo{address}{New York, NY, USA}, \bibinfo{pages}{971–980}.
\newblock
\showISBNx{9781605584874}


\bibitem[\protect\citeauthoryear{Ramp{\'{e}}rez, Zahmatkesh, and Valle}{Ramp{\'{e}}rez et~al\mbox{.}}{2021}]%
        {DBLP:conf/bigdataconf/RamperezZV21}
\bibfield{author}{\bibinfo{person}{V{\'{\i}}ctor Ramp{\'{e}}rez}, \bibinfo{person}{Shima Zahmatkesh}, {and} \bibinfo{person}{Emanuele~Della Valle}.} \bibinfo{year}{2021}\natexlab{}.
\newblock \showarticletitle{Scaling the monitoring of approximate top-k queries in streaming windows}. In \bibinfo{booktitle}{\emph{2021 {IEEE} International Conference on Big Data (Big Data), Orlando, FL, USA, December 15-18, 2021}}, \bibfield{editor}{\bibinfo{person}{Yixin Chen}, \bibinfo{person}{Heiko Ludwig}, \bibinfo{person}{Yicheng Tu}, \bibinfo{person}{Usama~M. Fayyad}, \bibinfo{person}{Xingquan Zhu}, \bibinfo{person}{Xiaohua Hu}, \bibinfo{person}{Suren Byna}, \bibinfo{person}{Xiong Liu}, \bibinfo{person}{Jianping Zhang}, \bibinfo{person}{Shirui Pan}, \bibinfo{person}{Vagelis Papalexakis}, \bibinfo{person}{Jianwu Wang}, \bibinfo{person}{Alfredo Cuzzocrea}, {and} \bibinfo{person}{Carlos Ordonez}} (Eds.). \bibinfo{publisher}{{IEEE}}, \bibinfo{pages}{181--189}.
\newblock
\urldef\tempurl%
\url{https://doi.org/10.1109/BIGDATA52589.2021.9672041}
\showDOI{\tempurl}


\bibitem[\protect\citeauthoryear{Ruzzo and Tompa}{Ruzzo and Tompa}{1999}]%
        {DBLP:conf/ismb/RuzzoT99}
\bibfield{author}{\bibinfo{person}{Walter~L. Ruzzo} {and} \bibinfo{person}{Martin Tompa}.} \bibinfo{year}{1999}\natexlab{}.
\newblock \showarticletitle{A Linear Time Algorithm for Finding All Maximal Scoring Subsequences}. In \bibinfo{booktitle}{\emph{Proceedings of the Seventh International Conference on Intelligent Systems for Molecular Biology, August 6-10, 1999, Heidelberg, Germany}}, \bibfield{editor}{\bibinfo{person}{Thomas Lengauer}, \bibinfo{person}{Reinhard Schneider}, \bibinfo{person}{Peer Bork}, \bibinfo{person}{Douglas~L. Brutlag}, \bibinfo{person}{Janice~I. Glasgow}, \bibinfo{person}{Hans{-}Werner Mewes}, {and} \bibinfo{person}{Ralf Zimmer}} (Eds.). \bibinfo{publisher}{{AAAI}}, \bibinfo{pages}{234--241}.
\newblock
\urldef\tempurl%
\url{http://www.aaai.org/Library/ISMB/1999/ismb99-027.php}
\showURL{%
\tempurl}


\bibitem[\protect\citeauthoryear{Shastri, Yang, Rundensteiner, and Ward}{Shastri et~al\mbox{.}}{2011}]%
        {DBLP:conf/cikm/ShastriDRW11}
\bibfield{author}{\bibinfo{person}{Avani Shastri}, \bibinfo{person}{Di Yang}, \bibinfo{person}{Elke~A. Rundensteiner}, {and} \bibinfo{person}{Matthew~O. Ward}.} \bibinfo{year}{2011}\natexlab{}.
\newblock \showarticletitle{MTopS: scalable processing of continuous top-k multi-query workloads}. In \bibinfo{booktitle}{\emph{Proceedings of the 20th {ACM} Conference on Information and Knowledge Management, {CIKM} 2011, Glasgow, United Kingdom, October 24-28, 2011}}, \bibfield{editor}{\bibinfo{person}{Craig Macdonald}, \bibinfo{person}{Iadh Ounis}, {and} \bibinfo{person}{Ian Ruthven}} (Eds.). \bibinfo{publisher}{{ACM}}, \bibinfo{pages}{1107--1116}.
\newblock
\urldef\tempurl%
\url{https://doi.org/10.1145/2063576.2063736}
\showDOI{\tempurl}


\bibitem[\protect\citeauthoryear{Spouge, Mari{\~n}o-Ram{\'\i}rez, and Sheetlin}{Spouge et~al\mbox{.}}{2012}]%
        {spouge2012ruzzo}
\bibfield{author}{\bibinfo{person}{John~L Spouge}, \bibinfo{person}{Leonardo Mari{\~n}o-Ram{\'\i}rez}, {and} \bibinfo{person}{Sergey~L Sheetlin}.} \bibinfo{year}{2012}\natexlab{}.
\newblock \showarticletitle{The ruzzo-tompa algorithm can find the maximal paths in weighted, directed graphs on a one-dimensional lattice}. In \bibinfo{booktitle}{\emph{2012 IEEE 2nd International Conference on Computational Advances in Bio and medical Sciences (ICCABS)}}. IEEE, \bibinfo{pages}{1--6}.
\newblock


\bibitem[\protect\citeauthoryear{Subercaze, Gravier, Gillani, Kammoun, and Laforest}{Subercaze et~al\mbox{.}}{2017}]%
        {DBLP:journals/pvldb/SubercazeGGKL17}
\bibfield{author}{\bibinfo{person}{Julien Subercaze}, \bibinfo{person}{Christophe Gravier}, \bibinfo{person}{Syed Gillani}, \bibinfo{person}{Abderrahmen Kammoun}, {and} \bibinfo{person}{Fr{\'{e}}d{\'{e}}rique Laforest}.} \bibinfo{year}{2017}\natexlab{}.
\newblock \showarticletitle{Upsortable: Programming TopK Queries Over Data Streams}.
\newblock \bibinfo{journal}{\emph{Proc. {VLDB} Endow.}} \bibinfo{volume}{10}, \bibinfo{number}{12} (\bibinfo{year}{2017}), \bibinfo{pages}{1873--1876}.
\newblock
\urldef\tempurl%
\url{https://doi.org/10.14778/3137765.3137797}
\showDOI{\tempurl}


\bibitem[\protect\citeauthoryear{Tao and Papadias}{Tao and Papadias}{2006}]%
        {DBLP:journals/tkde/TaoP06}
\bibfield{author}{\bibinfo{person}{Yufei Tao} {and} \bibinfo{person}{Dimitris Papadias}.} \bibinfo{year}{2006}\natexlab{}.
\newblock \showarticletitle{Maintaining Sliding Window Skylines on Data Streams}.
\newblock \bibinfo{journal}{\emph{{IEEE} Trans. Knowl. Data Eng.}} \bibinfo{volume}{18}, \bibinfo{number}{2} (\bibinfo{year}{2006}), \bibinfo{pages}{377--391}.
\newblock
\urldef\tempurl%
\url{https://doi.org/10.1109/TKDE.2006.48}
\showDOI{\tempurl}


\bibitem[\protect\citeauthoryear{Tran, Fan, and Shahabi}{Tran et~al\mbox{.}}{2016}]%
        {DBLP:journals/pvldb/TranFS16}
\bibfield{author}{\bibinfo{person}{Luan~V. Tran}, \bibinfo{person}{Liyue Fan}, {and} \bibinfo{person}{Cyrus Shahabi}.} \bibinfo{year}{2016}\natexlab{}.
\newblock \showarticletitle{Distance-based Outlier Detection in Data Streams}.
\newblock \bibinfo{journal}{\emph{Proc. {VLDB} Endow.}} \bibinfo{volume}{9}, \bibinfo{number}{12} (\bibinfo{year}{2016}), \bibinfo{pages}{1089--1100}.
\newblock
\urldef\tempurl%
\url{https://doi.org/10.14778/2994509.2994526}
\showDOI{\tempurl}


\bibitem[\protect\citeauthoryear{Tran, Mun, and Shahabi}{Tran et~al\mbox{.}}{2020}]%
        {DBLP:journals/pvldb/TranMS20}
\bibfield{author}{\bibinfo{person}{Luan~V. Tran}, \bibinfo{person}{Minyoung Mun}, {and} \bibinfo{person}{Cyrus Shahabi}.} \bibinfo{year}{2020}\natexlab{}.
\newblock \showarticletitle{Real-Time Distance-Based Outlier Detection in Data Streams}.
\newblock \bibinfo{journal}{\emph{Proc. {VLDB} Endow.}} \bibinfo{volume}{14}, \bibinfo{number}{2} (\bibinfo{year}{2020}), \bibinfo{pages}{141--153}.
\newblock
\urldef\tempurl%
\url{https://doi.org/10.14778/3425879.3425885}
\showDOI{\tempurl}


\bibitem[\protect\citeauthoryear{Yang, Shastri, Rundensteiner, and Ward}{Yang et~al\mbox{.}}{2011}]%
        {DBLP:conf/edbt/YangSRW11}
\bibfield{author}{\bibinfo{person}{Di Yang}, \bibinfo{person}{Avani Shastri}, \bibinfo{person}{Elke~A. Rundensteiner}, {and} \bibinfo{person}{Matthew~O. Ward}.} \bibinfo{year}{2011}\natexlab{}.
\newblock \showarticletitle{An optimal strategy for monitoring top-k queries in streaming windows}. In \bibinfo{booktitle}{\emph{{EDBT} 2011, 14th International Conference on Extending Database Technology, Uppsala, Sweden, March 21-24, 2011, Proceedings}}, \bibfield{editor}{\bibinfo{person}{Anastasia Ailamaki}, \bibinfo{person}{Sihem Amer{-}Yahia}, \bibinfo{person}{Jignesh~M. Patel}, \bibinfo{person}{Tore Risch}, \bibinfo{person}{Pierre Senellart}, {and} \bibinfo{person}{Julia Stoyanovich}} (Eds.). \bibinfo{publisher}{{ACM}}, \bibinfo{pages}{57--68}.
\newblock
\urldef\tempurl%
\url{https://doi.org/10.1145/1951365.1951375}
\showDOI{\tempurl}


\bibitem[\protect\citeauthoryear{Yang, Xie, Zhang, and Chen}{Yang et~al\mbox{.}}{2025}]%
        {10999049}
\bibfield{author}{\bibinfo{person}{Meng Yang}, \bibinfo{person}{Rui Xie}, \bibinfo{person}{Yongjun Zhang}, {and} \bibinfo{person}{Yue Chen}.} \bibinfo{year}{2025}\natexlab{}.
\newblock \showarticletitle{Robust Microgrid Dispatch With Real-Time Energy Sharing and Endogenous Uncertainty}.
\newblock \bibinfo{journal}{\emph{IEEE Transactions on Smart Grid}} \bibinfo{volume}{16}, \bibinfo{number}{4} (\bibinfo{year}{2025}), \bibinfo{pages}{3085--3098}.
\newblock
\urldef\tempurl%
\url{https://doi.org/10.1109/TSG.2025.3569095}
\showDOI{\tempurl}


\bibitem[\protect\citeauthoryear{Yoon, Lee, and Lee}{Yoon et~al\mbox{.}}{2019}]%
        {DBLP:journals/pvldb/YoonLL19}
\bibfield{author}{\bibinfo{person}{Susik Yoon}, \bibinfo{person}{Jae{-}Gil Lee}, {and} \bibinfo{person}{Byung~Suk Lee}.} \bibinfo{year}{2019}\natexlab{}.
\newblock \showarticletitle{{NETS:} Extremely Fast Outlier Detection from a Data Stream via Set-Based Processing}.
\newblock \bibinfo{journal}{\emph{Proc. {VLDB} Endow.}} \bibinfo{volume}{12}, \bibinfo{number}{11} (\bibinfo{year}{2019}), \bibinfo{pages}{1303--1315}.
\newblock
\urldef\tempurl%
\url{https://doi.org/10.14778/3342263.3342269}
\showDOI{\tempurl}


\bibitem[\protect\citeauthoryear{Zhang, Yang, Zheng, Zhu, Li, and Wang}{Zhang et~al\mbox{.}}{2026}]%
        {long-version}
\bibfield{author}{\bibinfo{person}{Zhongshuai Zhang}, \bibinfo{person}{Xiaochun Yang}, \bibinfo{person}{Baihua Zheng}, \bibinfo{person}{Rui Zhu}, \bibinfo{person}{Haomin Li}, {and} \bibinfo{person}{Bin Wang}.} \bibinfo{year}{2026}\natexlab{}.
\newblock \bibinfo{title}{Continuous Query for Top-K Maximal Sum Intervals over Streaming Data (Long Version)}.
\newblock
\newblock
\urldef\tempurl%
\url{https://github.com/zhangzhongshuai/C-KMaxI/blob/main/Continuous_Query_for_Top-K_Maximal_Sum_Intervals_over_Streaming_Data_Long_Version.pdf}
\showURL{%
\tempurl}


\bibitem[\protect\citeauthoryear{Zhu, Wang, Yang, and Zheng}{Zhu et~al\mbox{.}}{2023}]%
        {DBLP:journals/pacmmod/ZhuWYZ23}
\bibfield{author}{\bibinfo{person}{Rui Zhu}, \bibinfo{person}{Bin Wang}, \bibinfo{person}{Xiaochun Yang}, {and} \bibinfo{person}{Baihua Zheng}.} \bibinfo{year}{2023}\natexlab{}.
\newblock \showarticletitle{Closest Pairs Search Over Data Stream}.
\newblock \bibinfo{journal}{\emph{Proc. {ACM} Manag. Data}} \bibinfo{volume}{1}, \bibinfo{number}{3} (\bibinfo{year}{2023}), \bibinfo{pages}{205:1--205:26}.
\newblock
\urldef\tempurl%
\url{https://doi.org/10.1145/3617326}
\showDOI{\tempurl}


\bibitem[\protect\citeauthoryear{Zhu, Wang, Yang, Zheng, and Wang}{Zhu et~al\mbox{.}}{2017}]%
        {DBLP:journals/tkde/ZhuWYZW17}
\bibfield{author}{\bibinfo{person}{Rui Zhu}, \bibinfo{person}{Bin Wang}, \bibinfo{person}{Xiaochun Yang}, \bibinfo{person}{Baihua Zheng}, {and} \bibinfo{person}{Guoren Wang}.} \bibinfo{year}{2017}\natexlab{}.
\newblock \showarticletitle{{SAP:} Improving Continuous Top-K Queries Over Streaming Data}.
\newblock \bibinfo{journal}{\emph{{IEEE} Trans. Knowl. Data Eng.}} \bibinfo{volume}{29}, \bibinfo{number}{6} (\bibinfo{year}{2017}), \bibinfo{pages}{1310--1328}.
\newblock
\urldef\tempurl%
\url{https://doi.org/10.1109/TKDE.2017.2662236}
\showDOI{\tempurl}


\end{thebibliography}

\end{document}